\documentclass{article}

\usepackage[tmargin=1in,bmargin=1in,lmargin=1.25in,rmargin=1.25in]{geometry}
\usepackage{graphicx}
\usepackage{bm}
\usepackage{amsfonts}
\usepackage{amsmath}
\usepackage{amssymb}
\usepackage{booktabs}
\usepackage{amsfonts}
\usepackage{sectsty}
\usepackage[compact]{titlesec}
\usepackage{color}
\usepackage[usenames,dvipsnames,svgnames,table]{xcolor}
\usepackage{authblk}

\usepackage{mathptmx}

\setlength{\parindent}{0in}
\setlength{\parskip}{0.3cm}



\newcommand{\R}{\mathbb{R}}
\newcommand{\B}{\mathcal{B}}
\newcommand{\X}{\bm{X}}
\newcommand{\e}{\varepsilon}

\usepackage{times}
\usepackage{bm}
\usepackage{bbm}
\usepackage[round]{natbib}
\usepackage{hyperref}

\usepackage[plain,noend]{algorithm2e}

\usepackage{amsthm}
\newtheorem{definition}{Definition}
\newtheorem{proposition}{Proposition}
\newtheorem{lemma}{Lemma}
\newtheorem{example}{Example}

\newtheorem{remark}{Remark}


\makeatletter
\renewcommand{\algocf@captiontext}[2]{#1\algocf@typo. \AlCapFnt{}#2} 
\def\@algocf@capt@plain{top}
\renewcommand{\algocf@makecaption}[2]{%
  \addtolength{\hsize}{\algomargin}%
  \sbox\@tempboxa{\algocf@captiontext{#1}{#2}}%
  \ifdim\wd\@tempboxa >\hsize
    \hskip .5\algomargin%
    \parbox[t]{\hsize}{\algocf@captiontext{#1}{#2}}
  \else%
    \global\@minipagefalse%
    \hbox to\hsize{\box\@tempboxa}
  \fi%
  \addtolength{\hsize}{-\algomargin}%
}
\makeatother

\title{\Large \bfseries Kinetic energy choice in Hamiltonian/hybrid Monte
Carlo}
\author[1]{Samuel Livingstone\footnote{samuel.livingstone@bristol.ac.uk}}
\author[1]{Michael F. Faulkner}
\author[2]{Gareth O. Roberts}

\affil[1]{\normalsize\itshape School of Mathematics, University of Bristol, University Walk,\\ Bristol BS8 1TW, U.K.}
\affil[2]{\normalsize\itshape Department of Statistics, University of Warwick, Coventry CV4 7AL, U.K.}

\date{\today}

\begin{document}

\maketitle

\markboth{}{}




\maketitle

\begin{abstract}
We consider how different choices of kinetic energy in Hamiltonian Monte Carlo affect algorithm performance.  To this end, we introduce two quantities which can be easily evaluated, the composite gradient and the implicit noise.  Results are established on integrator stability and geometric convergence, and we show that choices of kinetic energy that result in heavy-tailed momentum distributions can exhibit an undesirable negligible moves property, which we define.  A general efficiency-robustness trade off is outlined, and implementations which rely on approximate gradients are also discussed.  Two numerical studies illustrate our theoretical findings, showing that the standard choice which results in a Gaussian momentum distribution is not always optimal in terms of either robustness or efficiency.
\end{abstract}

\section{Introduction}

\label{intro} 

Hamiltonian Monte Carlo is a Markov chain Monte Carlo method which is now both widely used in Bayesian inference, and increasingly studied and developed.  The idea of the approach is to use the deterministic measure-preserving dynamics of Hamiltonian flow to promote fast exploration of a parameter space of interest.  To achieve this the space is augmented with a momentum variable, and the Markov chain evolves by switching between re-sampling this momentum and solving Hamilton's equations for a prescribed amount of time.  Typically the equations cannot be solved exactly and so a time-reversible and volume-preserving numerical integrator is used, with discretisation errors controlled for using a Metropolis step.  Comprehensive reviews are given in \cite{neal2011mcmc} and \cite{betancourt2017geometric}.

The free variables in Hamiltonian Monte Carlo are the time for which Hamilton's equations should be solved between momentum refreshments, the choice of numerical integrator and step-size, and the choice of distribution for the momentum.  Typically the St\"ormer--Verlet or leapfrog numerical integrator is used, which gives a reasonable balance between energy preservation and computational cost.  Guidelines for step-size choice are given in \cite{beskos2013optimal}, which is typically tuned to reach a 65-80\% Metropolis acceptance rate.  Heuristics exist for the integration time \citep{hoffman2014no}, and there is some justification for a stochastic choice given in \cite{bou2015randomized}.

This paper is concerned with the choice of momentum distribution $\nu(\cdot)$, addressing a question which has been raised previously \citep{barthelme2011discussion,stephens2011discussion}.  We will assume throughout that $\nu(\cdot)$ possesses a density which is proportional to $\exp\{-K(p)\}$ for some $K(p)$, which we call the kinetic energy of the system.  The standard choice is $K(p) = p^T p/2$, with the resulting $\nu(\cdot)$ a Gaussian distribution.  The general requirements are simply that $K(p)$ is differentiable, symmetric about zero and that $\nu(\cdot)$ can be sampled from.  Alternative options to the Gaussian have recently been suggested \citep{lu2016relativistic,zhang2016towards}.  Here we consider how such a choice affects the algorithm as a whole, in terms of stability and convergence, and develop guidelines for practitioners.

Our key findings are that the robustness and efficiency of the method can to some degree be controlled through the quantity $\nabla K\circ \nabla U(x)$, notation defined below, which we term the composite gradient. In particular, we propose that balancing the tails of the kinetic energy with those of the potential to make this approximately linear in $x$ should give good performance, and that no faster than linear growth is necessary for algorithm stability.  When $\pi(\cdot)$ is very light-tailed, this can be done by choosing $\nu(\cdot)$ to be heavier-tailed. There are, however, serious disadvantages to choosing any heavier tails than those of a Laplace distribution, which can result in the sampler moving very slowly in certain regions of the space.  We introduce a negligible moves property for Markov chains to properly characterize this behaviour.  We also find that in practice considering the distribution of $\nabla K(p)$, which we term the implicit noise, is important, since this governs the behaviour of the sampler in regions where $\|\nabla U(x)\|$ is small.  We suggest various choices for controlling these two quantities, and support our findings with a numerical study.  We also comment on how changing the kinetic energy can affect behaviour when $\nabla U(x)$ is approximated, either through subsampling or within an exact-approximate Monte Carlo scheme.


\subsection{Setup, notation and assumptions}

Throughout we denote the distribution from which expectations are desired by $\pi(\cdot)$.  We assume that $\pi(\cdot)$ possesses a density which is proportional to $\exp\{-U(x)\}$, for some $C^1(\R^d)$ potential $U:\R^d \to [0,\infty)$, where $x \in \R^d$.  The Hamiltonian is formed as $H(x,p) = U(x) + K(p)$, where $p\in \R^d$ and $K:\R^d \to [0,\infty)$ is a continuously differentiable kinetic energy, and the corresponding Hamiltonian dynamics are given by
\begin{equation} \label{eqn:hdynamics}
\dot{x} = \nabla K(p), ~~~~ \dot{p} = -\nabla U(x).
\end{equation}
We restrict ourselves to separable Hamiltonians, meaning the distribution for $p$ does not depend on $x$, as in this case general purpose and well-studied explicit numerical integrators are available \citep{leimkuhler2004simulating}.  We also focus on the leapfrog numerical integrator \citep{neal2011mcmc} and a choice of integration time $T = L\e$, where $\e$ is the integrator step-size and $L$ is the number of leapfrog steps.  We assume that $L$ is drawn from some distribution $\Psi(\cdot)$, which is independent of the current $x$ and $p$, and denote the result of solving \eqref{eqn:hdynamics} using this integrator for $T$ units of time given the starting position and momentum $(x,p)$ as $\tilde{\varphi}_T(x,p)$.  With these tools, the Hamiltonian Monte Carlo method is described as Algorithm \ref{al1} below.

\begin{algorithm}[!h]
\caption{Hamiltonian Monte Carlo} \label{al1}
\vspace*{-12pt}
\begin{tabbing}
   \enspace Require $x^0, \e$\\
   \enspace For $i=1$ to $i=n$ \\
   \qquad Draw $p \sim \nu(\cdot)$\\
   \qquad Draw $L \sim \Psi(\cdot)$ and set $T = L\e$\\
   \qquad Propose $(x^*,p^*) \leftarrow \tilde{\varphi}_T(x^{i-1},p)$\\
   \qquad Set $x^i \leftarrow x^*$ with probability $1 \wedge \exp \left\{ H(x^{i-1},p) - H(x^*,p^*) \right\}$, \\ \qquad \quad otherwise set $x^i \leftarrow x^{i-1}$\\
   \enspace Output $x = (x^0,...,x^n)$
\end{tabbing}
\end{algorithm}

We work on the Borel space $(\R^{d},\B)$ and product space $\R^{2d}$ with the product $\sigma$-field. Throughout let $\|x\|,\|x\|_1$ and $\|x\|_\infty$ be the Euclidean, $L_1$ and $L_\infty$ norms of $x\in\R^{d}$, $B_{r}(x)=\{y\in\R^{d}:\|y-x\|< r\}$ be the Euclidean ball of radius $r$ centred at $x$, and $\partial_{i}=\partial/\partial x_{i}$. For functions $f,g:\R^d \to \R^d$, let $f \circ g(x) = f(g(x))$.  Whenever a distribution is referred to as having a density then this with respect to the relevant Lebesgue measure.  We write $x^i$ for the $i$th point in a Markov chain produced by Hamiltonian Monte Carlo, and $x_{i\e}$ for the $i$th step of the leapfrog integrator within a single iteration of the method.

\section{General observations}
\label{sec:general}

The effects of changing the proposal input noise distribution are more complex in Hamiltonian Monte Carlo than in the random walk Metropolis and the Metropolis-adjusted Langevin algorithm, two methods for which this question has been studied previously \citep{jarner2007convergence,stramer1999langevin}.  For both of these cases the input noise is combined linearly with some deterministic function of the current point of the Markov chain.  As a result, larger values of this noise will typically result in larger proposed moves, which may be advantageous if $\pi(\cdot)$ has heavy tails or possesses multiple modes.  By contrast, in Hamiltonian Monte Carlo the choice of kinetic energy alters both the input noise distribution and the Hamiltonian dynamics, and so the impact of different choices is not so transparent.

If we consider how a single proposal in Hamiltonian Monte Carlo perturbs the current position $x^{i-1}$ at iteration $i$, then setting $x_0 = x^{i-1}$, $p_0 \sim \nu(\cdot)$ and solving \eqref{eqn:hdynamics} for $T$ units of time gives
\begin{equation}
x^* = x_0 + \int_0^T \nabla K \left\{ p_0 - \int_0^s \nabla U(x_u)du \right\}ds.
\end{equation}
An explicit numerical method typically involves the approximation $\int_0^{h} \nabla U(x_u)du \approx \nabla U(x_0)h$ for some suitably chosen $h$.  In the case of the leapfrog integrator the approximate solution over a single leapfrog step becomes
\begin{equation} \label{eqn:compgrad}
x_\e = x_0 + \e \nabla K \left\{ p_0 - \frac{\e}{2} \nabla U(x_0) \right\},
\end{equation}
with the corresponding momentum update
\begin{equation} \label{eqn:momentum_update}
p_\e = p_0 - \frac{\e}{2} \left\{\nabla U(x_0) + \nabla U(x_\e) \right\}.
\end{equation}
Recalling that $K(p)$ is an even function, meaning $\nabla K(p)$ is odd, then it will turn out that in many cases the function that intuitively governs the speed of the $x$-dynamics when $\|\nabla U(x)\|$ is large is
\begin{equation} \label{eqn:compgrad2}
\nabla K \circ \nabla U(x),
\end{equation}
which we refer to from this point as the composite gradient.  Although in practice the choice of $\e$ will influence sampler behaviour, we will demonstrate that a qualitative understanding can be developed by considering only \eqref{eqn:compgrad2}.  Similarly, when $\|\nabla U(x)\|$ is small, $x_\e$ will resemble a random walk proposal with perturbation $\e$ multiplied by $\nabla K(p_0)$, which we refer to as the implicit noise.  We argue that an appropriate choice of kinetic energy is one for which these two quantities are both suitably optimized over, and in the next sections we examine both when this can be done and what a suitable choice for each should be.

\section{The composite gradient}
\subsection{Geometric convergence and long term stability}
We begin with some general results relating the choice of kinetic energy to the composite gradient, and potential consequences for the convergence rate of the method.

\begin{definition}
We call a distribution which has a density $f(x) \propto \exp\{-g(x)\}$ \emph{heavy-tailed} if $
\lim_{\|x\|\to\infty}\|\nabla g(x)\| = 0$, and \emph{light-tailed} if $\lim_{\|x\|\to\infty}\|\nabla g(x)\| = \infty$.
\end{definition}
Distributions which do not fall into either category are those for which the tails are heavy in some directions and light in others, for which the sampler will behave very differently in distinct regions of the space.  A detailed study of these is beyond the scope of this work, though we conjecture that the least favourable region would dictate convergence.  We also do not include distributions with exponentially decaying densities, which are often treated as a special case when analysing Metropolis--Hastings  methods \citep{jarner2000geometric}.

Recall that a $\pi$-invariant Markov chain with transition kernel $P$ is called geometrically ergodic if for some positive constants $C<\infty$, $\rho<1$ and a Lyapunov function $V:\R^d \to [1,\infty)$ the bound 
\begin{equation} \label{eqn:geoerg}
\|P^n(x,\cdot) - \pi(\cdot)\|_{TV} \leq CV(x)\rho^n
\end{equation}
can be constructed, where $\|\mu(\cdot) - \nu(\cdot)\|_{TV}$ denotes the total variation distance between two measures $\mu(\cdot)$ and $\nu(\cdot)$.  See for example \cite{roberts2004general} for more details.

\begin{proposition} \label{prop:lackge}
In either of the following cases Algorithm \ref{al1} will not produce a geometrically ergodic Markov chain:

(i) $\pi(\cdot)$ is heavy-tailed

(ii) $\pi(\cdot)$ is light-tailed and the following conditions hold:

\begin{enumerate}
\item The composite gradient satisfies
\begin{equation} \label{eq:cond1}
\lim_{\|x\|\to\infty}\frac{\|\nabla K\circ\nabla U(x)\|}{\|x\|} = \infty,
\end{equation}
\item There is a strictly increasing unbounded function $\phi:[0,\infty)\to [1,\infty)$
and $m_{1},m_{2}<\infty$ with $m_1 \geq 1$ such that for every $M\geq m_{1}$ and
$\|x\|\geq m_{2}$ 
\begin{equation}
\|y\|\geq M\|x\|\implies\|\nabla U(y)\|\geq\phi(M)\|\nabla U(x)\|.\label{eq:cond2}
\end{equation}
\item Setting $\triangle H(x_0,p_0)=H(x_{L\varepsilon},p_{L\varepsilon})-H(x_{0},p_{0})$
and $\triangle(x_0,p_0)=\|x_{L\varepsilon}\|-\|x_{0}\|+\|p_{L\varepsilon}\|-\|p_{0}\|$,
\begin{equation}
\triangle(x_0,p_0)\to\infty \implies \triangle H(x_0,p_0)\to\infty.\label{eq:cond3}
\end{equation}
\item For all $\delta < \infty$, there is a $C_\delta>0$ such that for all $x \in \R^d$
\begin{equation}
\frac{\|\nabla K \{\delta \nabla U(x)\}\|}{\|\nabla K \circ \nabla U(x)\|} \geq C_\delta
\label{eq:cond4}
\end{equation}
\item Either $K(p)$ satisfies
\begin{equation}
\|p\| \geq \|y\| \implies \|\nabla K(p)\| \geq \|\nabla K(y)\|,
\label{eq:cond5a}
\end{equation}
or there is a $k:\mathbb{R} \to \mathbb{R}$ such that $K(p) := \sum_{i=1}^d k(p_i)$ and
\begin{equation}
|p_i|\geq |y_i| \implies |k'(p_i)|\geq |k'(y_i)|.
\label{eq:cond5b}
\end{equation}
for all $i$.
\end{enumerate}

\end{proposition}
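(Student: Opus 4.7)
The plan is to establish, in each case, a failure of a standard necessary condition for geometric ergodicity of the Metropolis--Hastings chain in Algorithm~\ref{al1}. Following the work of Roberts and Tweedie and of Jarner and Hansen, it suffices to exhibit a sequence $x_n\in\mathbb{R}^d$ with $\|x_n\|\to\infty$ along which either the holding probability at $x_n$ tends to one, or no Foster--Lyapunov drift toward a small set can hold. Either conclusion contradicts \eqref{eqn:geoerg}.

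For case (i), $\|\nabla U(x)\|\to 0$ as $\|x\|\to\infty$. Iterating \eqref{eqn:momentum_update} along a leapfrog trajectory of length $L\e$ shows that $p_{i\e}\to p_0$, and hence by \eqref{eqn:compgrad} that $x^*-x_0\to L\e\,\nabla K(p_0)$ and $p^*\to p_0$, uniformly in $(x_0,p_0)$ as $\|x_0\|\to\infty$. Both $U(x^*)-U(x_0)$ and $K(p^*)-K(p_0)$ therefore vanish, so the acceptance probability tends to one while the proposal remains a bounded step of magnitude $O(L\e\|\nabla K(p_0)\|)$. In the tails the sampler thus degenerates to a random walk Metropolis with proposal $L\e\,\nabla K(p_0)$, $p_0\sim\nu(\cdot)$, on a heavy-tailed target --- a class known not to be geometrically ergodic by classical results on the random walk Metropolis with light-tailed proposal kernels. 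Transferring the Foster--Lyapunov obstruction from the limiting random walk to the HMC chain follows by combining uniformity of the approximation on bounded sets of $p_0$ with conditioning on $L\sim\Psi(\cdot)$.

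For case (ii), the conditions are chosen so that a single leapfrog step already drives $\|x_\e\|$ far above $\|x_0\|$, and the cascade amplifies with each subsequent step. Starting from \eqref{eqn:compgrad}, $x_\e-x_0=\e\,\nabla K(p_0-\tfrac{\e}{2}\nabla U(x_0))$. For fixed $p_0$ and large $\|\nabla U(x_0)\|$, the argument of $\nabla K$ is dominated by $-\tfrac{\e}{2}\nabla U(x_0)$, so \eqref{eq:cond4} with $\delta=\e/2$ gives a lower bound on $\|\nabla K(p_0-\tfrac{\e}{2}\nabla U(x_0))\|$ proportional to $\|\nabla K\circ\nabla U(x_0)\|$. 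Condition \eqref{eq:cond1} then forces $\|x_\e\|/\|x_0\|\to\infty$. By \eqref{eq:cond2}, $\|\nabla U(x_\e)\|$ grows correspondingly, so \eqref{eqn:momentum_update} implies that $\|p_\e\|$ dominates $\|p_0\|$; \eqref{eq:cond5a} or \eqref{eq:cond5b} then transfers the growth to $\|\nabla K(p_\e)\|$, and the next leapfrog step is again governed by the composite gradient term. Induction over $L$ steps produces $\triangle(x_0,p_0)\to\infty$ along a set of $p_0$ of positive $\nu$-mass; \eqref{eq:cond3} converts this into $\triangle H(x_0,p_0)\to\infty$, so the acceptance probability tends to zero and the holding probability at $x_n$ tends to one, precluding geometric ergodicity.

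The main obstacle, I expect, is maintaining the cascade of case (ii) across $L$ leapfrog steps on a non-trivial set of initial momenta. Conditions \eqref{eq:cond4} and \eqref{eq:cond5a}--\eqref{eq:cond5b} are tailored precisely to rule out cancellations between $p_0$ and $\nabla U(x_0)$, and to forbid the updated momentum from being pulled into a direction where $\nabla K$ is small, but assembling them into a clean induction while handling the randomness in both $p_0\sim\nu(\cdot)$ and $L\sim\Psi(\cdot)$ is the delicate part. A secondary technical point in case (i) is making the random walk Metropolis analogy rigorous uniformly over $L$, which should follow by conditioning on $L$ and taking a worst case over its support.
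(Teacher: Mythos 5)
Your overall strategy is the same as the paper's: part (i) reduces to the chain making uniformly tight jumps on a target without exponential moments, and part (ii) to a cascade across leapfrog steps that forces $\triangle H\to\infty$ and hence the rejection probability to one in the tails, which is then fed into the Roberts--Tweedie necessary condition. Two points, however, do not go through as written. In part (ii) you conclude from $\triangle(x_0,p_0)\to\infty$ ``along a set of $p_0$ of positive $\nu$-mass'' that the acceptance probability tends to zero and the holding probability to one. Positive mass is not enough: the criterion requires the \emph{overall} acceptance rate $a(x)=\int \alpha(x,y)\,P(x,dy)$ to vanish along a sequence $\|x_n\|\to\infty$, and a fixed positive-mass set of rejecting momenta only bounds $a(x)$ away from one. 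What rescues the argument, and what the paper uses, is that the good event is $\{\|p_0\|\le (\e/4)\min(\|\nabla U(x_0)\|,\|\nabla U(x_0)\|_\infty)\}$, whose probability tends to one precisely because light tails force $\|\nabla U(x_0)\|\to\infty$; this must be said explicitly. Relatedly, the induction you defer to ``the delicate part'' is where the proof actually lives: to lower-bound $\|p_{(i+1/2)\e}\|$ by a multiple of $\|\nabla U(x_{i\e})\|$ you must dominate the accumulated sum $\sum_{j<i}\|\nabla U(x_{j\e})\|$ by a geometric series in $\phi(M)^{-1}$ using \eqref{eq:cond2}, which forces quantitative choices such as $\phi(M)\ge 7/3$ (and $\ge 5$ for the terminal momentum bound); \eqref{eq:cond4} enters not through $\delta=\e/2$ alone but through the factor $\e/4$ that survives after this cancellation.

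In part (i), the ``transfer of the Foster--Lyapunov obstruction from the limiting random walk'' is not a theorem you can invoke, and the claimed uniformity in $(x_0,p_0)$ fails for large $p_0$. The clean route, which the paper takes, is to verify directly the two hypotheses of Jarner and Tweedie's necessary condition: (a) $\int e^{\gamma\|x\|-U(x)}dx=\infty$ for every $\gamma>0$, which follows because $\|\nabla U\|\to 0$ makes $U$ grow sublinearly outside a ball; and (b) uniform tightness of the increment distributions, which follows because $\nabla U\in C_0(\R^d)$ is globally bounded, so the argument of $\nabla K$ in each leapfrog step is tight uniformly in $x_0$, and pushing a tight family through the continuous (hence locally bounded) map $\nabla K$ preserves tightness. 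Your observation that the acceptance probability tends to one is true for fixed $p_0$ but is not needed for this argument.
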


\begin{remark} Condition 4 in part (ii) will be satisfied if, for example, $\nabla K$ is a homogeneous function. It is satisfied for all choices considered in this work.
\end{remark}

For the standard quadratic choice of kinetic energy the Hamiltonian Monte Carlo method will not produce a geometrically ergodic Markov chain when $\pi(\cdot)$ has heavy-tails \citep{livingstone2016geometric}.  Part (i) of the above states that no choice of kinetic energy which is independent of $x$ can rectify this.  We explain why in subsection \ref{subsec:ht}.  Part (ii) states that provided the density is sufficiently regular that \eqref{eq:cond3} holds, and that the degree of tail oscillation in $\|\nabla U(x)\|$ is restricted through \eqref{eq:cond2}, then the composite gradient should not grow faster than linearly in $\|x\|$ for geometric ergodicity. Within the class of kinetic energies for which this requirement is met, we motivate an optimal choice in subsection \ref{subsec:epfamily}.

The composite gradient was deduced based on studying the leapfrog integrator over a single step. Next, however, we argue that controlling this quantity in fact induces stability over multiple leapfrog steps.
\begin{proposition} \label{prop:stability}
If for some $q>0$ and $A,B,C,D<\infty$
\begin{equation}
\|\nabla U(x)\| \leq A\|x\|^q + B, \qquad \|\nabla K(p)\| \leq C\|p\|^\frac{1}{q} + D,
\end{equation}
then $\limsup_{\|x\|\to\infty}\|\nabla K\circ\nabla U(x)\|/\|x\|<\infty$, and furthermore if $\|p_{0}\|\leq E_0\|x_0\|^q + F_0$ for some $E_0,F_0 <\infty$, then writing $x_{L\e}$ and $p_{L\e}$ as functions of $x_0$ and $p_0$, it holds that
\[
\|x_{L\e}(x_0,p_0)\| \leq E_L \|x_0\| + F_L, \qquad \|p_{L\e}(x_0,p_0)\| \leq \mathbf{G}_L\|x_0\|^q + \mathbf{H}_L.
\]
for some $E_L,F_L,\mathbf{G}_L,\mathbf{H}_L<\infty$ which do not depend on $x_0$.
\end{proposition}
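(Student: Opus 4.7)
The first claim is immediate: composing the two polynomial bounds gives
\[
\|\nabla K\circ\nabla U(x)\| \;\leq\; C\bigl(A\|x\|^{q}+B\bigr)^{1/q}+D,
\]
and the elementary inequality $(a+b)^{r}\leq C_{r}(a^{r}+b^{r})$ with $C_{r}=\max(1,2^{r-1})$ (applied with $r=1/q$) shows that the right-hand side grows at most linearly in $\|x\|$, so the limsup is finite.

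For the second claim I would induct on the number of leapfrog steps, proving the following statement for every $\ell\geq 0$: there exist finite constants $E_{\ell},F_{\ell},\mathbf{G}_{\ell},\mathbf{H}_{\ell}$, independent of $(x_{0},p_{0})$, such that
\[
\|x_{\ell\e}\| \leq E_{\ell}\|x_{0}\| + F_{\ell}, \qquad \|p_{\ell\e}\| \leq \mathbf{G}_{\ell}\|x_{0}\|^{q}+\mathbf{H}_{\ell}.
\]
The base case $\ell=0$ holds by hypothesis. For the inductive step from $\ell$ to $\ell+1$, I would work through the three half-updates \eqref{eqn:momentum_update} and \eqref{eqn:compgrad}. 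First, the half-momentum $p_{(\ell+1/2)\e}=p_{\ell\e}-(\e/2)\nabla U(x_{\ell\e})$ satisfies
\[
\|p_{(\ell+1/2)\e}\| \leq \mathbf{G}_{\ell}\|x_{0}\|^{q}+\mathbf{H}_{\ell} + \tfrac{\e}{2}\bigl(A\|x_{\ell\e}\|^{q}+B\bigr),
\]
and expanding $\|x_{\ell\e}\|^{q}\leq C_{q}((E_{\ell}\|x_{0}\|)^{q}+F_{\ell}^{q})$ via the same elementary inequality gives a bound of the form $\mathbf{G}_{\ell}'\|x_{0}\|^{q}+\mathbf{H}_{\ell}'$. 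Second, applying the bound on $\|\nabla K\|$ and once more the $(a+b)^{1/q}$ inequality yields
\[
\|\nabla K(p_{(\ell+1/2)\e})\| \leq C(\mathbf{G}_{\ell}'\|x_{0}\|^{q}+\mathbf{H}_{\ell}')^{1/q}+D \leq \alpha_{\ell}\|x_{0}\|+\beta_{\ell},
\]
so that $\|x_{(\ell+1)\e}\|\leq\|x_{\ell\e}\|+\e\|\nabla K(p_{(\ell+1/2)\e})\|$ inherits a linear bound in $\|x_{0}\|$, defining $E_{\ell+1},F_{\ell+1}$. Finally, $\|p_{(\ell+1)\e}\|\leq\|p_{(\ell+1/2)\e}\|+(\e/2)(A\|x_{(\ell+1)\e}\|^{q}+B)$, and expanding the right-hand side with $C_{q}$ once more yields a bound of the form $\mathbf{G}_{\ell+1}\|x_{0}\|^{q}+\mathbf{H}_{\ell+1}$. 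Iterating up to $\ell=L$ delivers the conclusion.

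The main obstacle is purely bookkeeping: handling the two regimes $q\geq 1$ (where $t\mapsto t^{1/q}$ is concave and subadditive) and $q<1$ (where $t\mapsto t^{q}$ is subadditive) via a single constant $C_{r}$, and then carefully tracking how the constants $(E_{\ell},F_{\ell},\mathbf{G}_{\ell},\mathbf{H}_{\ell})$ compound through the recursion so as to stay independent of $(x_{0},p_{0})$. No additional hypothesis beyond the polynomial growth bounds and the assumed initial bound on $\|p_{0}\|$ is needed, and the recursion terminates in finitely many steps, so the constants are indeed finite.
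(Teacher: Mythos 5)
Your proposal is correct and follows essentially the same route as the paper: the first claim by direct composition of the two polynomial bounds, and the second by induction over the leapfrog steps, propagating a linear bound on $\|x_{i\e}\|$ and a $q$-power bound on $\|p_{i\e}\|$ via the subadditivity-type inequality $(a+b)^{r}\leq C_{r}(a^{r}+b^{r})$. The only cosmetic difference is that you index all bounds to $\|x_{0}\|$ while the paper's induction bounds $\|p_{i\e}\|$ in terms of $\|x_{i\e}\|$ and then composes the $x$-recursion at the end; the two bookkeeping schemes are equivalent.
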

As a contrast, if the assumption on $\nabla K(p)$ was replaced with $\|\nabla K(p)\|\leq C\|p\|+D$, then the bound would read $\|x_{L\varepsilon}(x_0,p_0)\| = E_L\|x_{0}\|^{1 \vee q^{L}} + F_L$, which grows exponentially in $L$ when $q>1$.

\subsection{Kinetic energies that induce negligible moves}
\label{subsec:negmoves}
Next we consider a heavy-tailed choice of $\nu(\cdot)$.  In this instance the composite gradient will in fact decay as $\|\nabla U(x)\|$ grows, meaning that if $\pi(\cdot)$ is light-tailed the resulting Markov chain will move very slowly in some regions of the space.  We formalize this intuition below.

\begin{definition}
We say that a Markov chain on $\R^d$ with transition kernel $P$ possesses the \emph{negligible moves} property if for every $\delta>0$
\begin{equation} \label{eqn:negmove}
\lim_{\|x\|\to\infty} P\{x,B_{\delta}(x)\} = 1.
\end{equation}
\end{definition}

\begin{proposition} \label{prop:negmove}
If $\pi(\cdot)$ is light-tailed and $\nu(\cdot)$ is heavy-tailed then a Markov chain produced by the Hamiltonian Monte Carlo method possesses the negligible moves property.
\end{proposition}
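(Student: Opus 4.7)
The plan is to show $P(x,B_{\delta}(x)) \to 1$ as $\|x\|\to\infty$ for each fixed $\delta>0$. Writing the current state as $x^0 = x$, either the Metropolis step accepts and $x^1 = x^* = x_{L\e}$, or it rejects and $x^1 = x^0 \in B_{\delta}(x^0)$; hence it suffices to prove that the probability that the proposal $x^*$ lies in $B_{\delta}(x^0)$ tends to one.

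For any $\eta > 0$ pick $M,L_0 < \infty$ so that the event $E = \{\|p_0\|\leq M,\, L\leq L_0\}$ has probability at least $1-\eta$; such a choice exists and does not depend on $x^0$ since $\nu$ and $\Psi$ are independent of the current state. Working on $E$, I would introduce the half-step momentum $q_j = p_{(j+1/2)\e}$, so that the leapfrog recursion reads $q_0 = p_0 - (\e/2)\nabla U(x^0)$, $x_{(j+1)\e} = x_{j\e} + \e\nabla K(q_j)$, and $q_{j+1} = q_j - \e\nabla U(x_{(j+1)\e})$. The goal is to prove inductively for $j = 0,\ldots,L_0$ that $\|q_j\|\to\infty$ and hence $\|\nabla K(q_j)\|\to 0$ as $\|x^0\|\to\infty$, from which the telescoping bound $\|x_{j\e}-x^0\| \leq \e\sum_{i<j}\|\nabla K(q_i)\|$ yields $\|x_{j\e}-x^0\|\to 0$. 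The base case is immediate: $\|q_0\|\geq (\e/2)\|\nabla U(x^0)\|-M \to \infty$ by the light-tailed hypothesis on $\pi$, and then $\|\nabla K(q_0)\|\to 0$ by the heavy-tailed hypothesis on $\nu$. For the inductive step, the identity $q_j = p_0 - (\e/2)\nabla U(x^0) - \e\sum_{i=1}^{j}\nabla U(x_{i\e})$, together with the induction hypothesis that each $x_{i\e}$ is close to $x^0$ and the $C^1$ regularity of $U$, allows one to argue that each $\nabla U(x_{i\e})$ is close in direction to $\nabla U(x^0)$, so the sum adds rather than cancels and $\|q_j\| \gtrsim (j+1/2)\e\|\nabla U(x^0)\|$, which again diverges. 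Once the induction closes, for $\|x^0\|$ large enough every increment is bounded by $\delta/L_0$, so $\|x^*-x^0\|<\delta$ on $E$; sending $\eta\to 0$ completes the argument.

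The main obstacle is justifying the non-cancellation claim for the gradient sum, since only pointwise continuity of $\nabla U$ is assumed rather than any uniform modulus. The leverage I would use is that $\nabla K$ is uniformly bounded on $\R^d$ (it is continuous and vanishes at infinity), so the entire leapfrog trajectory lies inside a neighbourhood of $x^0$ of radius $\e L_0 \|\nabla K\|_\infty$ whose radius does not depend on $x^0$. On this fixed-radius neighbourhood the continuity of $\nabla U$, combined with the fact that $\|x_{i\e}-x^0\|$ actually shrinks to zero at a rate driven by the heavy tails of $\nu$, allows one to control the angular deviation of $\nabla U(x_{i\e})$ from $\nabla U(x^0)$ relative to the diverging magnitude $\|\nabla U(x^0)\|$, making the coarse lower bound on $\|q_j\|$ genuine and closing the induction.
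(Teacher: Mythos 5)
Your proposal is correct and follows essentially the same route as the paper's proof: restrict to a high-probability event on the refreshed momentum, bound $\|x_{L\e}-x_0\|$ by the telescoping sum of increments $\e\|\nabla K(p_{(i+1/2)\e})\|$, and run an induction showing each half-step momentum diverges (because the accumulated gradient terms reinforce rather than cancel, by continuity of $\nabla U$), so that heavy-tailedness of $\nu(\cdot)$ forces each increment to vanish. The only substantive differences are cosmetic improvements: you explicitly truncate $L$ to a finite $L_0$ and bound each increment by $\delta/L_0$, and you flag (and sketch a fix for) the reliance on merely pointwise continuity of $\nabla U$, a point the paper's own proof also leans on without further comment.
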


In general this property seems undesirable, the implication being that if the current point $x$ has large norm then the chain is unlikely to move very far in any appreciable number of steps.  It does not, however, preclude geometric ergodicity, as often such steps can be close to deterministic in a desirable direction.  We give a simple example to illustrate the phenomenon.

\begin{example}
Consider the Markov chain with transition
\[
x^i = \max(x^{i-1} - 1 + \xi^i,1), ~~\xi^i \sim N(0,1),
\]
which is geometrically ergodic as shown in section 16.1.3 of \cite{meyn1993markov}.  Then the transformed Markov chain $(y^i)_{i \geq 1}$, where each $y^i = \log(x^i)$, is also geometrically ergodic, but possesses the negligible moves property.
\end{example}

Although a geometrically decaying convergence bound can still be established in such cases, the slow movement of the chain when $\|x\|$ is large is accounted for in the Lyapunov function $V(x)$ present in the bound \eqref{eqn:geoerg}.

\begin{proposition} \label{prop:lyapunov}
If \eqref{eqn:negmove} holds for a Hamiltonian Monte Carlo method then any Lyapunov function $V:\X \to [1,\infty)$ cannot satisfy either of the following:

(i) $\log V$ is uniformly lower semi-continuous, meaning for all $\epsilon'>0$, there is a $\delta'<\infty$ such that if $\|y-x\|\leq\delta'$, uniformly in $x$, then $\log V(y)\geq\log V(x)-\epsilon'$.

(ii) There is an $s\in(0,\infty)$ such that $\liminf_{\|x\|\to\infty}V(x)e^{-s\|x\|}=c$ for some $c\in(0,\infty)$. 
\end{proposition}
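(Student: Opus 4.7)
The plan is to argue by contradiction using the standard Meyn--Tweedie characterization of geometric ergodicity: if a $\pi$-invariant chain with transition kernel $P$ is geometrically ergodic with Lyapunov function $V$, then there exist $\lambda < 1$, $b < \infty$, and a small set $C$, which for HMC on $\R^d$ can be taken bounded, such that $PV(x) \leq \lambda V(x) + b\,\mathbf{1}_C(x)$. In particular, for $\|x\|$ large enough that $x \notin C$, we have $PV(x)/V(x) \leq \lambda < 1$. The strategy is to show that either (i) or (ii) forces $\liminf_{\|x\|\to\infty} PV(x)/V(x) \geq 1$, a contradiction.

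For part (i), fix $\epsilon' \in (0,-\log\lambda)$ and use uniform lower semi-continuity of $\log V$ to obtain $\delta' < \infty$ such that $V(y) \geq V(x)\,e^{-\epsilon'}$ whenever $\|y-x\| \leq \delta'$, uniformly in $x$. Then
\[
PV(x) \geq \int_{B_{\delta'}(x)} V(y)\,P(x,dy) \geq V(x)\,e^{-\epsilon'}\,P\{x,B_{\delta'}(x)\}.
\]
By the negligible moves property, $P\{x,B_{\delta'}(x)\} \to 1$ as $\|x\|\to\infty$, so $\liminf_{\|x\|\to\infty} PV(x)/V(x) \geq e^{-\epsilon'} > \lambda$, contradicting the drift inequality.

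For part (ii), extract a sequence $(x_n)$ with $\|x_n\|\to\infty$ along which $V(x_n)e^{-s\|x_n\|} \to c$. For any $\eta \in (0,c)$ and $\delta > 0$, the $\liminf$ assumption yields that for $\|y\|$ sufficiently large, $V(y) \geq (c-\eta)\,e^{s\|y\|}$; in particular, for $y \in B_\delta(x_n)$ with $n$ large, $\|y\| \geq \|x_n\| - \delta$ gives $V(y) \geq (c-\eta)\,e^{-s\delta}\,e^{s\|x_n\|}$, while $V(x_n) \leq (c+\eta)\,e^{s\|x_n\|}$ for $n$ large. Hence
\[
\frac{PV(x_n)}{V(x_n)} \geq \frac{c-\eta}{c+\eta}\,e^{-s\delta}\,P\{x_n,B_\delta(x_n)\},
\]
and negligible moves gives $\liminf_n PV(x_n)/V(x_n) \geq \frac{c-\eta}{c+\eta}\,e^{-s\delta}$. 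Since $\eta, \delta > 0$ were arbitrary, the $\liminf$ is at least $1$; but $x_n \notin C$ eventually (as $V(x_n) \to \infty$ and $C$ is bounded), contradicting $PV(x_n)/V(x_n) \leq \lambda < 1$.

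The main technical obstacle is the implicit step of choosing the small set $C$ to be bounded, which is needed to conclude $x\notin C$ for large $\|x\|$; this relies on the $\pi$-irreducibility structure of HMC on $\R^d$ together with the fact that under mild regularity compact sets serve as small sets, so the drift condition can be enforced outside a bounded region. A minor secondary point is verifying that the suprema over $\eta$ and $\delta$ in part (ii) can be taken after the $\liminf$ in $n$, which follows because the bound holds for every fixed pair $(\eta,\delta)$.
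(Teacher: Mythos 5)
Your proof follows essentially the same route as the paper: invoke the geometric drift condition $\int V(y)P(x,dy)\leq \lambda V(x)$ outside a small set, use the negligible moves property to force $\int V(y)P(x,dy)/V(x)\to 1$ as $\|x\|\to\infty$ (directly under (i), along the extracted sequence and with the $e^{-s\delta}(c-\eta)/(c+\eta)$ bookkeeping under (ii)), and contradict $\lambda<1$. The only substantive difference is the step you yourself flag as the main obstacle: you need the small set to be bounded so that the drift applies for all large $\|x\|$, and your suggested justification (that compact sets are small under mild regularity) points the wrong way --- what is required is that \emph{any} small set is bounded. The paper proves exactly this (Lemma B\ref{lem:compact}) by observing that the negligible moves property makes the one-step increment distributions uniformly tight and then appealing to Lemma 2.2 of \cite{jarner2000geometric}; with that lemma in hand your argument is complete and matches the paper's.
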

The above implies that Lyapunov functions will typically exhibit faster than exponential growth in $\|x\|$, meaning the penalty in the bound \eqref{eqn:geoerg} for starting the chain in the tails of the distribution will be very large. This reflects the fact that the chain will take a long time to move an appreciable distance and hence approach equilibrium.  So although the negligible moves property does not preclude geometric convergence, we regard it as an undesirable feature.



\subsection{Illustrative example}
\label{subsec:epfamily}
Part (ii) of Proposition \ref{prop:lackge} relates to the numerical solution of \eqref{eqn:hdynamics} using the leapfrog method.  If the composite gradient grows faster than linearly in $\|x\|$ then the dynamical system exhibits stiffness, meaning gradients change very quickly and hence numerical approximations in which they are assumed to be constant over small time periods will become unstable.  Two possible remedies are to use a more complex implicit numerical method or adaptive step-size control, as discussed in \cite{okudo2016hamiltonian}.  With the standard choice of kinetic energy $\nabla K \circ \nabla U(x) = \nabla U(x)$, meaning these instabilities occur when $\|\nabla U(x)\|$ grows faster than linearly in $\|x\|$.  Since the composite gradient dictates this phenomenon in general, fast growth in $\nabla U(x)$ can be counteracted by choosing a kinetic energy which is subquadratic in $\|p\|$.  Physically, choosing a slowly growing kinetic energy  slows down the Hamiltonian dynamics, meaning the system \eqref{eqn:hdynamics} is no longer stiff in the $x$-coordinate and hence simple numerical schemes like the leapfrog method become stable.  We can further use the same idea when $\nabla U(x)$ exhibits relatively slow growth by choosing a fast-growing kinetic energy which will speed up the flow allowing the chain to explore the state space more efficiently.

To demonstrate this phenomenon more concretely we consider the family
of separable Hamiltonians $H:\R^{2}\to[0,\infty)$ given by
\begin{equation}
H(x,p)=\alpha^{-1}|x|^{\alpha}+\beta^{-1}|p|^{\beta},\label{eq:1d_expfamily}
\end{equation}
for suitable choices $\alpha,\beta>1$, which is necessary
to ensure continuity of the derivative. Hamilton's equations are
\begin{align}
\dot{x} & =\text{sgn}(p)|p|^{\beta-1}\nonumber \\
\dot{p} & =-\text{sgn}(x)|x|^{\alpha-1}\label{eq:1d_odes}
\end{align}
The flow is periodic since the contours of $H$ are closed curves
in $\R^{2}$. These kinetic energies correspond to choosing $\nu(\cdot)$ from the exponential power family of \cite{box1973bayesian}, for which direct sampling schemes are provided in \cite{mineo2014normalp}.  \cite{zhang2016towards} recommend choosing this class of momentum distributions in Hamiltonian Monte Carlo.  The following proposition shows how the the period length depends on the initial value of the Hamiltonian.

\begin{proposition} \label{prop:period}
For the class of Hamiltonians given
by \eqref{eq:1d_expfamily} and dynamics given by the exact solutions
to \eqref{eq:1d_odes}, the period length
\[
\mathcal{P}(E) = \inf_{t>0}\left\{t: (x_t,p_t)=(x_0,p_0)\right\}
\]
is related to the initial value of the Hamiltonian $E=H(x_0,p_0)$
through the expression 
\[
\mathcal{P}(E) = c_{\alpha,\beta}E^{\eta},~~~\eta=\frac{1-(\beta-1)(\alpha-1)}{\alpha\beta},
\]
where $c_{\alpha,\beta}<\infty$ does not depend
on $E$. In particular, if $\alpha = 1+\gamma$ and $\beta = 1+\gamma^{-1}$
for some $\gamma\in(0,\infty$), then $\nabla K\circ \nabla U(x) = x$ and $\eta=0$, meaning the period
length does not depend on $E$.
\end{proposition}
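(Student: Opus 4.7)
The plan is to compute the period directly by integrating around a single orbit, using energy conservation to express $|p|$ in terms of $|x|$ and then rescaling to extract the $E$-dependence.

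First, I would exploit the symmetries of $H$ under $x \mapsto -x$ and $p \mapsto -p$: the orbit is a closed curve symmetric about both axes, so $\mathcal{P}(E) = 4 \tau(E)$, where $\tau(E)$ is the time to travel from $(x_0,p_0) = (0, (\beta E)^{1/\beta})$ (where $\dot x > 0$) to the turning point $(x_{\max}, 0) = ((\alpha E)^{1/\alpha}, 0)$. Along this arc $p > 0$, so from the first equation in \eqref{eq:1d_odes} together with energy conservation $\beta^{-1}|p|^\beta = E - \alpha^{-1} x^\alpha$, I get
\[
\frac{dx}{dt} = |p|^{\beta-1} = \bigl(\beta(E - \alpha^{-1} x^\alpha)\bigr)^{(\beta-1)/\beta}.
\]
Separating variables and integrating gives
\[
\mathcal{P}(E) = 4\int_0^{(\alpha E)^{1/\alpha}} \bigl(\beta(E - \alpha^{-1}x^\alpha)\bigr)^{-(\beta-1)/\beta}\,dx.
\]

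Next, I would rescale by $x = (\alpha E)^{1/\alpha} u$, so that $\alpha^{-1}x^\alpha = E u^\alpha$ and $dx = (\alpha E)^{1/\alpha}\,du$. This yields
\[
\mathcal{P}(E) = 4 (\alpha E)^{1/\alpha} (\beta E)^{-(\beta-1)/\beta}\int_0^1 (1-u^\alpha)^{-(\beta-1)/\beta}\,du.
\]
The exponent of $E$ simplifies to $\tfrac{1}{\alpha} + \tfrac{1}{\beta} - 1 = \tfrac{\alpha+\beta-\alpha\beta}{\alpha\beta} = \tfrac{1-(\alpha-1)(\beta-1)}{\alpha\beta} = \eta$, while the remaining integral is finite because $\beta > 1$ gives $(\beta-1)/\beta < 1$, so the singularity at $u=1$ (where $1-u^\alpha \sim \alpha(1-u)$) is integrable. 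Define
\[
c_{\alpha,\beta} = 4\,\alpha^{1/\alpha} \beta^{-(\beta-1)/\beta}\int_0^1 (1-u^\alpha)^{-(\beta-1)/\beta}\,du < \infty,
\]
which depends only on $\alpha$ and $\beta$, to conclude $\mathcal{P}(E) = c_{\alpha,\beta} E^\eta$.

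Finally, for the special case I would verify directly: with $\alpha = 1+\gamma$ and $\beta = 1+\gamma^{-1}$, $(\alpha-1)(\beta-1) = \gamma\cdot\gamma^{-1} = 1$, so $\eta = 0$ and $\mathcal{P}(E) = c_{\alpha,\beta}$. Moreover $\nabla U(x) = \mathrm{sgn}(x)|x|^{\alpha-1}$ and $\nabla K(p) = \mathrm{sgn}(p)|p|^{\beta-1}$, so $\nabla K \circ \nabla U(x) = \mathrm{sgn}(x)|x|^{(\alpha-1)(\beta-1)} = x$. The only substantive obstacle is the rescaling bookkeeping and verifying that the boundary behaviour of the integrand is integrable; both are handled by the constraint $\beta > 1$ assumed in the statement. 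Everything else reduces to algebraic simplification.
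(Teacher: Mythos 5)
Your proposal is correct and follows essentially the same route as the paper's proof: quarter-period symmetry, writing $dt/dx = p^{1-\beta}$ with $p$ eliminated via energy conservation, and the rescaling $x = (\alpha E)^{1/\alpha}u$ to pull out the factor $E^{\eta}$. Your explicit check that the integrand's singularity at the turning point is integrable (so that $c_{\alpha,\beta}<\infty$) is a small point the paper leaves implicit, but otherwise the two arguments coincide.
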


The above result characterizes precisely how the speed of the Hamiltonian flow depends on the growth rates of the potential and kinetic energies.  If $\eta <0$ the flow will be very fast at higher energies, and so explicit numerical methods will typically become unstable when $|x|$ is large.  If $\eta > 0$ then the period length will increase with $|x|$, so numerical methods will be more stable, but the sampler may take some time to move from the tails back into the centre of the space.

The ideal choice in terms of finding the right balance between efficiency and numerical stability would therefore appear to be setting $\eta = 0$, which for this class of models can be achieved by choosing the kinetic energy as the $L_p$ dual of the potential.  This corresponds to setting $\alpha = 1 + 1/(\beta - 1)$, which produces an algorithm for which the flow behaves in an auto-regressive manner when $|x|$ is large, since the composite gradient will be similar in magnitude to the current point.  In this case tuning the method should also be more straightforward, as a choice of integrator step-size and integration time made in the transient phase should still be reasonable at equilibrium.  When $\eta<0$, for any fixed choice of step-size $\e>0$ there is an $M_{\e} <\infty$ such that when $|x|>M_{\e}$ the algorithm will become numerically unstable, and when $\eta>0$ then it would be desirable in the transient phase to choose a comparatively large value for $\e$ and take more leapfrog steps, which may result in many proposals being rejected and unnecessary computational expense when the chain reaches equilibrium.





\section{The implicit noise}
\label{sec:implicitnoise}

There are likely to be regions in the centre of the space in which $\|\nabla U(x_0)\|$ is small, meaning the increment $\e \nabla K\{p_0 - \e\nabla U(x_0)/2\} \approx \e\nabla K(p_0)$.  This can occur when the density $\pi(x)$ contains flat regions, and we show one such case in section \ref{sec:examples}.  In these regions Hamiltonian Monte Carlo resembles a random walk Metropolis with proposal $x_{L\e} \approx x_0 + L\e\nabla K(p_0)$.  It would seem sensible to choose $K(p)$ such that the distribution of $\nabla K(p)$ would be an appropriate choice in a random walk Metropolis here.

For the standard quadratic choice $\nabla K(p) = p$, which is hence also Gaussian.  Different choices of kinetic energy, which may seem sensible by analysing the composite gradient, can however result in very different distributions for the implicit noise, which can either become bi-modal or exhibit very light tails.  We plot histograms of $\nabla K(p)$ for some natural choices of momentum distribution in figure \ref{fig1}.  

The optimum choice of implicit noise will typically be problem dependent, but it would seem logical to follow the standard choices used in the random walk Metropolis.  Bi-modal distributions have recently been suggested for this method \citep{yang2013searching}, but we found that something which resembles a Gaussian distribution for small values of $p$ performed favourably in experiments.

\begin{figure}
\includegraphics[height=10pc,width=30pc]{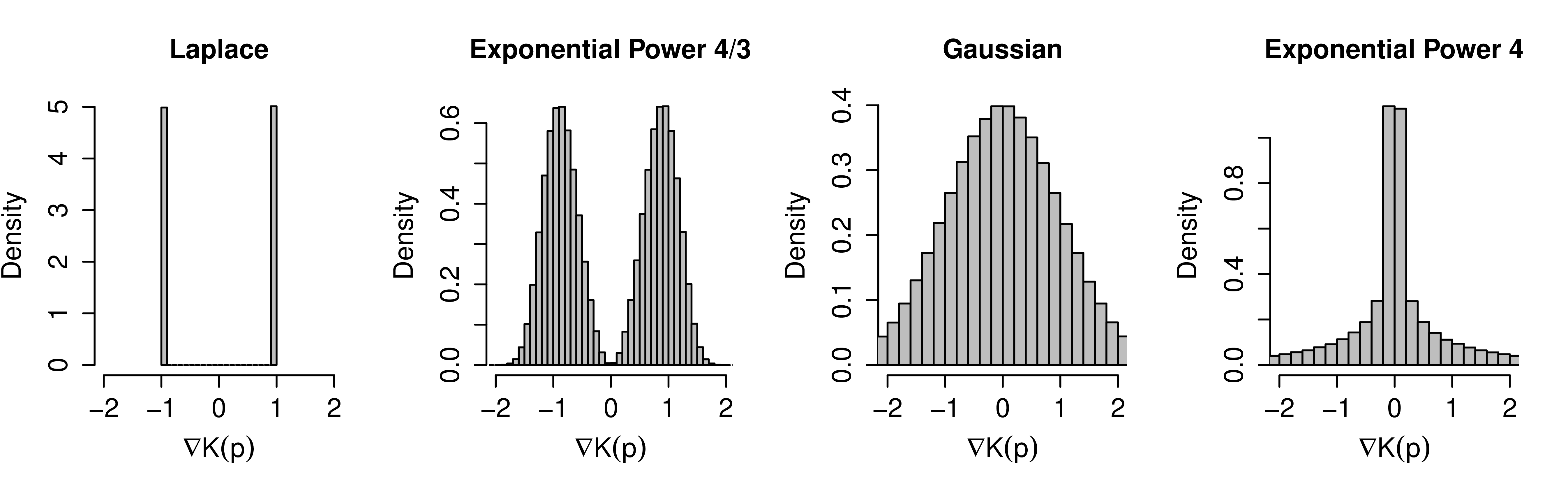}
\caption{Histograms showing how different choices of the momentum distribution $\nu(\cdot)$ can affect typical values of $\nabla K(p)$.}
\label{fig1}
\end{figure}

\section{General guidelines and the efficiency-robustness trade off}
\label{sec:guidelines}

The above results allow us to offer some general strategies and observations for choosing a kinetic energy, dependent on the objectives of the user.

If we deem the goal to be to maximize the speed of the flow when $\|x\|$ is large whilst retaining numerical stability, then an appropriate choice of kinetic energy for a given potential $U$ is one for which when $\|\nabla U(x)\|$ is large $\nabla K\{p_0 - \e \nabla U(x_0)/2\} \approx -c_{\e} x$ for some $c_{\e}>0$. This can be achieved when $\nabla U(x)$ is invertible through the choice
\begin{equation}
\nabla K (p) \approx \nabla U^{-1} \left(p\right),
\end{equation}
which implies that $\nabla K \circ \nabla U(x) \approx \nabla U^{-1} \circ \nabla U (x) = x$.  Of course often $\nabla U(x)$ will not possess an inverse, in which case the considering the leading order term in $\|\nabla U(x)\|$ is of greatest importance.  Below we give some examples of how such a choice for $K(p)$ can be constructed.

\begin{example}
If $x \sim N(0,\Sigma)$ then $\nabla U(x) = \Sigma^{-1}x$, meaning $-(\nabla U)^{-1}(-p) = \Sigma p$, implying the choice $p \sim N(0,\Sigma^{-1})$ and corresponding kinetic energy $K(p) = p^T\Sigma p/2$.
\end{example}

\begin{remark}
This result is intuitive, as if $K(p) = p^TM^{-1}p/2$, then the proposal after a single leapfrog step will be $x_\e = x_0 - \e^2 M^{-1} \nabla U(x) / 2 + \e M^{-1/2}z$ with $z \sim N(0,I)$.  This is equivalent to a Metropolis-adjusted Langevin algorithm proposal with pre-conditioning matrix $M^{-1}$, and in that setting it is known that this should be made equal to $\Sigma$ for optimal sampling \citep{roberts2001optimal,dalalyan2017theoretical}.
\end{remark}

\begin{example}
Consider a distribution with density $\pi(x) \propto e^{-U(x)}$ where
\[
U(x) = C(\alpha x^4 - \beta x^2),
\]
for some $C,\alpha,\beta>0$.  Such double well potential models are important in many areas of physics, including quantum  mechanical tunnelling~\citep{liang1992periodic}.  In this case $\nabla U(x) = 4C\alpha x^3 - 2C\beta x$, which is typically not invertible, but for large $|x|$ we see that $\nabla U(x)/x^3 \approx 4C\alpha$, which motivates the choice $\nabla K(p) = p^{1/3}/(4C\alpha)$ for large $|p|$.
\end{example}

\begin{example} \cite{park2008bayesian} suggest a Bayesian version of the bridge estimator for regression of \cite{frank1993statistical}.  In this case
\[
U(\beta) = L(\beta) + \lambda \sum_{j=1}^d |\beta_j|^q,
\]
for some $\lambda>0$, where $L(\beta)$ is a suitable loss function.  If $q >1$ and the asymptotic growth of $L(\beta)$ is at most linear in $\|\beta\|$, then the choice $K(p) = \sum_i|p_i|^\alpha$ with $\alpha = 1+1/(q-1)$ for large $\|p\|$ would maximize the speed of convergence during the transient phase of the algorithm.  Logistic regression, the Huberized loss of \cite{park2008bayesian} and the asymmetric pinball loss \citep{yu2001bayesian} are possible choices for $L(\beta)$ which fit this criterion.
\end{example}

\begin{example} \cite{neal2003slice} proposes the hierarchical model $x_i|\nu \sim N(0,e^{\nu})$ for $i = 1,...,9$, with $\nu \sim N(0,3^2)$, resulting in a joint distribution that resembles a ten-dimensional funnel.  Here $U(x_1,...,x_9,\nu) = \nu^2/18 + 9\nu/2 + e^{-\nu}\sum_i x_i^2 / 2$, meaning
\[
\nabla U(x_1,...,x_9,\nu) = \left( 2x_1 e^{-\nu}, ..., 2x_9 e^{-\nu}, \frac{\nu}{2} + \frac{9}{2} - \frac{e^{-\nu}}{2}\sum_i x_i^2 \right)^T.
\]
When $\nu \ll 0$ these gradients can be exponentially large, so a choice of $\nabla K(p)$ that is at most logarithmic in $p$ would be needed, which could arise from the choice of kinetic energy $K(p_i) \approx \{1 + \log(1 + |p_i|)\}|p_i|$.
\end{example}


A general appreciation for the consequences of different kinetic energy choices can be gained by considering a formal Taylor series expansion
\[
\nabla K\left\{ p_0 - \frac{\e}{2}\nabla U(x_0)\right\} = \nabla K\left\{ - \frac{\e}{2}\nabla U(x_0)\right\} + \nabla^2 K\left\{ - \frac{\e}{2}\nabla U(x_0)\right\}p_0 + \cdots
\]
In the Gaussian case $\nabla^2 K(p) = I$, meaning the refreshed momentum $p_0$ interacts linearly with the composite gradient term regardless of the current position in the state space.  If however, the momentum distribution is chosen to have heavier than Gaussian tails, then each element of $\nabla^2 K(p)$ will become negligibly small when $\|\nabla U(x_0)\|$ is large, reducing the influence of $p_0$.  A sampler with a flow of this type will move towards a high-density region of the space in an almost deterministic fashion when $\|x\|$ is large.  We formalize this intuition below.

\begin{definition}
A Hamiltonian Monte Carlo sampler is called \emph{deterministic in the tails} if as $\|x\| \to \infty$
\begin{equation}
\left\|\nabla^2 K\left\{ \frac{\e}{2}\nabla U(x) - p\right\} - \nabla^2 K \left\{ \frac{\e}{2} \nabla U(x) \right\} \right\| \to 0
\end{equation}
a.s., where $p \sim \nu(\cdot)$.

\end{definition}

Given the above intuition, the following result is immediate.

\begin{proposition} \label{prop:robust}
A Hamiltonian Monte Carlo sampler will be deterministic in the tails if $\pi(\cdot)$ is light-tailed and $K(p)$ is twice differentiable with
\begin{equation} \label{eqn:robust}
\lim_{\|p\| \to\infty} \| \nabla^2 K(p)\| = 0.
\end{equation}
\end{proposition}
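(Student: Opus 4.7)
The plan is to show directly that each of the two Hessian terms vanishes as $\|x\|\to\infty$, and then combine them via the triangle inequality. The hypotheses almost hand this to us: light-tailedness of $\pi(\cdot)$ forces $\|\nabla U(x)\|\to\infty$ by definition, and decay of $\|\nabla^2 K(q)\|$ as $\|q\|\to\infty$ is assumed outright. The only mild subtlety is the a.s.\ quantifier on the random $p$.

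First I would fix a realization of $p \sim \nu(\cdot)$ with $\|p\|<\infty$, which occurs with probability one since $\nu(\cdot)$ possesses a density on $\R^d$. For such a $p$, since $\|\nabla U(x)\|\to\infty$ as $\|x\|\to\infty$, the triangle inequality gives
\[
\left\|\tfrac{\e}{2}\nabla U(x) - p\right\| \;\geq\; \tfrac{\e}{2}\|\nabla U(x)\| - \|p\| \;\longrightarrow\; \infty,
\]
and obviously $\|\e\nabla U(x)/2\|\to\infty$ as well. Applying the assumed decay \eqref{eqn:robust} to each argument, both $\nabla^2 K\{\e\nabla U(x)/2 - p\}$ and $\nabla^2 K\{\e\nabla U(x)/2\}$ tend to zero in operator norm.

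Finally I would conclude by one more triangle inequality,
\[
\left\|\nabla^2 K\!\left\{\tfrac{\e}{2}\nabla U(x) - p\right\} - \nabla^2 K\!\left\{\tfrac{\e}{2}\nabla U(x)\right\}\right\| \;\leq\; \left\|\nabla^2 K\!\left\{\tfrac{\e}{2}\nabla U(x) - p\right\}\right\| + \left\|\nabla^2 K\!\left\{\tfrac{\e}{2}\nabla U(x)\right\}\right\|,
\]
and the right-hand side tends to $0$ as $\|x\|\to\infty$. Since this holds on the probability-one set $\{\|p\|<\infty\}$, the a.s.\ convergence in the definition of deterministic in the tails is verified.

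There is no real obstacle here; if anything, the only thing to watch is that one could avoid the triangle inequality altogether and instead argue via continuity of $\nabla^2 K$ plus its vanishing at infinity, but the two-term bound above is cleaner and requires no additional hypotheses beyond those in the statement.
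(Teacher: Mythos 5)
Your argument is valid for the definition exactly as printed: light-tailedness gives $\|\nabla U(x)\|\to\infty$, so both arguments $\tfrac{\e}{2}\nabla U(x)-p$ and $\tfrac{\e}{2}\nabla U(x)$ have diverging norms for any fixed finite $p$, and \eqref{eqn:robust} kills each Hessian term separately; the a.s.\ qualifier is handled correctly by fixing $p$ on a probability-one set before letting $\|x\|\to\infty$. Read literally, this is a shorter route than the paper's.

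However, the paper's own proof establishes something different, and the difference matters. It controls the \emph{gradient} difference $\|\nabla K\{\tfrac{\e}{2}\nabla U(x)-p\}-\nabla K\{\tfrac{\e}{2}\nabla U(x)\}\|$, bounding it by $M(x)\|p\|$ via the mean value inequality, where $M(x)=\sup_{\{4\|q\|\geq\e\|\nabla U(x)\|\}}\|\nabla^2 K(q)\|$, on the event $\|p\|\leq\gamma(x)$ with $\gamma(x)=\min\bigl[\tfrac{\e}{4}\|\nabla U(x)\|,\ \|\nabla^2K\{\tfrac{\e}{4}\nabla U(x)\}\|^{-1/2}\bigr]$, whose probability tends to one; both factors then vanish. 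The $\nabla^2$ in the displayed definition is almost certainly a slip for $\nabla$: the quantity that governs the leapfrog displacement, and that the Taylor expansion preceding the definition motivates, is $\nabla K\{p_0-\tfrac{\e}{2}\nabla U(x_0)\}$, and ``deterministic in the tails'' is meant to say that this displacement forgets $p_0$. Under that intended reading your triangle inequality collapses: $\|\nabla K(q)\|$ does not tend to zero as $\|q\|\to\infty$ (it grows linearly for the Gaussian choice and tends to a nonzero constant for the relativistic one), so neither term vanishes individually and the two-term bound gives nothing. The essential idea your proposal lacks is the Lipschitz/mean-value step: the difference of gradients at two points a distance $\|p\|$ apart is controlled by the supremum of the Hessian over the segment joining them, and that segment lies entirely in the far region $\{4\|q\|\geq\e\|\nabla U(x)\|\}$, where \eqref{eqn:robust} makes the Hessian small, provided $\|p\|$ is not too large --- which holds with probability tending to one, or, in the spirit of your set-up, eventually for each fixed $p$.
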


\begin{remark}
Typically if a choice of momentum distribution with heavier than Gaussian tails is made then \eqref{eqn:robust} will hold.
\end{remark}

As alluded to in subsection \ref{subsec:epfamily}, there is a trade off to be made between efficiency and step-size robustness. Choosing a lighter-tailed momentum distribution will result in a faster flow, and potentially a faster mixing Markov chain as a result, but the risk of numerical instabilities is higher, and so the algorithm will be very sensitive to the choice of $\e$.  Choosing a heavier-tailed momentum distribution will result in slower flows, meaning algorithm performance will be less sensitive to step-size choice, but the best-case mixing time under the optimal step-size may be slower.  We give a numerical example illustrating this in the supplementary material.  If the tails are too heavy then the algorithm will also exhibit the negligible moves property of section \ref{subsec:negmoves}.  A kinetic energy choice with Laplacian tails, for example, would result in a sampler which may be comparatively slow in the tails but is guaranteed to be numerically stable, since in this case the composite gradient is $\text{sign}\{ p_0 - \e\nabla U(x_0)/2\}$, which is bounded from above and below regardless of how fast $\|\nabla U(x_0)\|$ grows.

\begin{remark}
It has recently been shown that by choosing $\nu(\cdot)$ to be a Laplace distribution one can perform Hamiltonian Monte Carlo on problems for which the standard method is not applicable, such as problems with discontinuous likelihoods or involving discrete parameters. We do not comment further on this here, but see \cite{nishimura2017discontinuous} for details.
\end{remark}

\subsection{Relativistic power kinetic energies}
\label{subsec:grfamily}
From the above discussion it would seem sensible to design a kinetic energy function that looks similar to the Gaussian choice when its argument is small, but also allows a robust and efficient choice of composite gradient.  The relativistic kinetic energy 
\[
K(p) = \sum_i m_ic_i^2\left( 1+\frac{p_i^2}{m_i^2 c_i^2} \right)^{\frac{1}{2}}
\]
suggested in \cite{lu2016relativistic}, has many such desirable features, since it behaves as a quadratic function for small $p$ and is linear when $p$ is large.  We slightly generalize this choice to allow for different tail behaviours, resulting in the family of relativistic power kinetic energies
\begin{equation}
K(p) = \sum_i \beta^{-1} \left( 1+ \gamma_i^{-1}p_i^2 \right)^{\beta/2},
\end{equation}
where $\beta \geq 1$ and each $\gamma_i >0$.  When each $p_i^2 \ll \gamma_i$ then 
\[
K(p) \approx \sum_i \beta^{-1}(1+\gamma_i^{-1}p_i^2) \implies \partial_i K(p) \approx 2 (\beta\gamma_i)^{-1}p_i
\]
and when each $p_i^2 \gg \gamma_i$
\[
K(p) \approx \sum_i \beta^{-1} \gamma_i^{-\beta/2}|p_i|^{\beta} \implies \partial_i K(p) \approx \gamma_i^{-\beta/2}\text{sgn}(p)|p_i|^{\beta - 1}.
\]

Drawing samples from this family is a requirement for implementation. In our experiments we used adaptive rejection sampling ~\citep{gilks1992adaptive}, which we found to be reasonably efficient.  However, since the distribution is fairly regular in shape it is likely that specialized methods can be designed to be even more efficient. In addition, other strategies such a dependent sampling using Metropolis--Hastings could be employed, as in \cite{akhmatskaya2008gshmc}.

\subsection{Approximate gradients and doing away with Metropolis--Hastings}

One example in which the robustness-efficiency trade off outlined above can be informative is when estimates are used in place of $\nabla U(x)$, which may be either intractable or very expensive to compute.  Examples are given in \cite{chen2014stochastic,strathmann2015gradient,lindsten2016pseudo}.

If we assume that at each iteration the approximate gradient $\tilde{\nabla}U(x)$ satisfies
\[
\tilde{\nabla} U(x) = \nabla U(x) + \eta_x,
\]
where $\eta_x$ follows a distribution that may depend on $x$, but such that $E(\eta_x)=0$ for all $x$, then a similar Taylor expansion gives
\[
\nabla K\left\{ p_0 - \frac{\e}{2}\tilde{\nabla} U(x_0)\right\} = \nabla K\left\{ - \frac{\e}{2}\nabla U(x_0)\right\} + \nabla^2 K\left\{ - \frac{\e}{2}\nabla U(x_0)\right\}(p_0+\eta_x) + \cdots.
\]
Therefore, if the momentum distribution is chosen to have heavy enough tails that for large $\|x\|$ the last term becomes negligibly small, then the effects of such an approximation are mitigated and the the resulting approximate composite gradient will closely resemble $\nabla K\left\{ - \frac{\e}{2}\nabla U(x_0)\right\}$.

In some of the above mentioned approximate implementations, the Metropolis step is also omitted to reduce computational costs. When the gradient term $\|\nabla U(x)\|$ grows at a faster than linear rate, then often the resulting Markov chains become transient when this is done, as shown for the unadjusted Langevin algorithm in \cite{roberts1996exponential}.  In such cases it is difficult to give long time guarantees on the level of bias induced from a finite simulation.  Ensuring that the composite gradient grows no faster than linearly should mean that such transience is averted, meaning a stronger grasp of the degree of approximation can be established.  We leave a detailed exploration of this for future work.

\subsection{Heavy-tailed models}
\label{subsec:ht}

Based on the composite gradient intuition, it may seem desirable when $\pi(\cdot)$ is heavy-tailed to choose a kinetic energy for which $\|\nabla K(p)\| \to \infty$ as $\|p\| \to 0$.  In this case when $\|x\|$ is large the potential gradient will be very small, so choosing the kinetic energy in this way one could still make $\nabla K \circ \nabla U(x)$ linear in $\|x\|$.  However, the actual leapfrog perturbation is
\[
\e \nabla K \left\{ p_0 - \frac{\e}{2}\nabla U(x_0) \right\}.
\]
When $\pi(\cdot)$ is light-tailed the $p_0$ term becomes safe to ignore when $\|x\|$ is large, but if it is heavy-tailed then this is no longer the case as for large $\|x\|$ typical proposals will be of the form $x^* \approx x + L\e\nabla K( p )$, where $p \sim \nu(\cdot)$, which resembles a random walk.  It is because of this that, as shown in Part (i) of Proposition \ref{prop:lackge}, no choice of kinetic energy which is independent of $x$ can produce a geometrically ergodic chain.

It will likely be the case that choosing a heavier-tailed momentum distribution will be of benefit in this scenario in terms of rates of convergence, since this is true of the random walk Metropolis, as shown in \cite{jarner2007convergence}.  However, since it is the implicit noise $\nabla K(p)$ which drives the sampler, then care must also be taken with its form.  In the case where $\nu(\cdot)$ is a Cauchy distribution, for example, then $\nabla K(p) = 2p/(1+p^2)$, which is lighter-tailed than $p$ itself, so here the benefits of choosing heavier tails are not as strong as those for the random walk algorithm.

\section{Examples}
\label{sec:examples}

\subsection{Quantile regression}

We consider a Bayesian quantile regression as introduced in \cite{yu2001bayesian}.  The goal is to estimate the $\tau$th quantile
of a response $y\in\R$ conditioned on a collection of covariates
$x\in\R^m$, written $\mu(x)=F^{-1}(\tau|x)$.  Given $n$ data points $\{(x_{i},y_{i})\}_{i=1}^{n}$, we take the natural choice $\mu(x_i,\beta) = \sum_{j=1}^m x_{ij}\beta_j$,  and follow the approach of \cite{fasiolo2017fast} by estimating a posterior distribution for $\beta$ as $\pi(\beta|y,x)\propto\exp\left\{ -\sum_{i=1}^{n}L(\beta,x_{i},y_{i})\right\} \pi_{0}(\beta)$, using the general Bayesian updating framework of \cite{bissiri2016general}, with $L(\beta,x_i,y_i)$ a smoothed version of the pinball loss introduced in \cite{fasiolo2017fast}, given by
\[
L(\beta,x,y)=(\tau-1)\left\{ \frac{y-\mu(x,\beta)}{\sigma}\right\} +\xi\log\left[1+\exp\left\{ \frac{y-\mu(x,\beta)}{\xi\sigma}\right\} \right]+g(\xi,\sigma,\tau).
\]
Here $g(\xi,\sigma,\tau) = \log\left\{ \xi\sigma\text{Beta}\left[\xi(1-\tau),\xi\tau\right]\right\}$.  As $\xi \to 0$ then the non-smooth pinball loss is recovered.  This is linear in $\beta$, meaning that if $L_{q}$ priors are chosen as in bridge regression \citep{park2008bayesian} then the corresponding potential is
\[
U(\beta)=\sum_{i=1}^{n}L(\beta,x_{i},y_{i})+\lambda\sum_{j=1}^{d}|\beta_{j}|^{q},
\]
for $1<q\leq2$. The dominant term in $U(\beta)$ is the contribution from the prior, meaning that each $|\partial_j U(\beta)| = O(|\beta_j|^{q-1})$ for $j = 1,...,m$.

We performed two studies on 20 simulated data points with $m=2$, fixing $\sigma = 1$, $\lambda = 1$ and $\xi = 0.01$.  In the first we set $q=2$ for each $\beta_j$, and in the second $q = 1.5$.  Four different momentum distributions were tested: an exponential power family with shape parameter 3, Gaussian, Laplacian, and a t distribution with 4 degrees of freedom.  Although the set up is simple it still enables a demonstration of both the negligible moves property and how the composite gradient dictates sampler performance.  The number of leapfrog steps was set to 1 and the step size was tuned based on achieving a 65-75\% acceptance rate at equilibrium.  The samplers were then initialised far from the region of high probability and convergence speed was assessed, and also initialised at equilibrium to assess mixing in favourable areas.  Results are shown in Figure \ref{fig2}.  In the Gaussian prior study the exponential power choice is not shown since the resulting sampler did not move, as explained by Proposition \ref{prop:lackge}.  As expected, the Gaussian choice, which results in a linearly growing composite gradient, reaches equilibrium quicker than the others.  The Laplacian choice converges in a straight-line fashion, as the size of the proposed jump is always $\e\surd 2$ regardless of the current position.  The Student's t choice exhibits the negligible moves property as outlined in Proposition \ref{prop:negmove}, with convergence visibly slow during the first 5000 iterations.  When the heavier-tailed prior is chosen, the exponential power choice is now numerically stable and produces faster convergence than the Gaussian.  Since the potential growth is $O(|\beta|^{1.5})$ then this option results in a linear composite gradient, so the performance is to be expected.  The speed of the remaining choices is dictated by how the composite gradient grows, with the Student's t distribution again performing the worst.  In this study the difference is not so pronounced, as in probabilistic terms the sampler is initialised in a region that is not so far into the tails.  The slower convergence overall compared to the Gaussian study reflects the fact that $\pi(\cdot)$ is heavier-tailed.  In both studies all samplers mixed similarly well regardless of the kinetic energy when initialised at equilibrium, with slightly worse performance observed for the Student's t distribution.  As discussed in Section \ref{sec:implicitnoise}, the optimal choice of implicit noise is likely to be problem specific, with the Gaussian shape sensible when no other information is available.

\begin{figure}
\includegraphics[height=10pc]{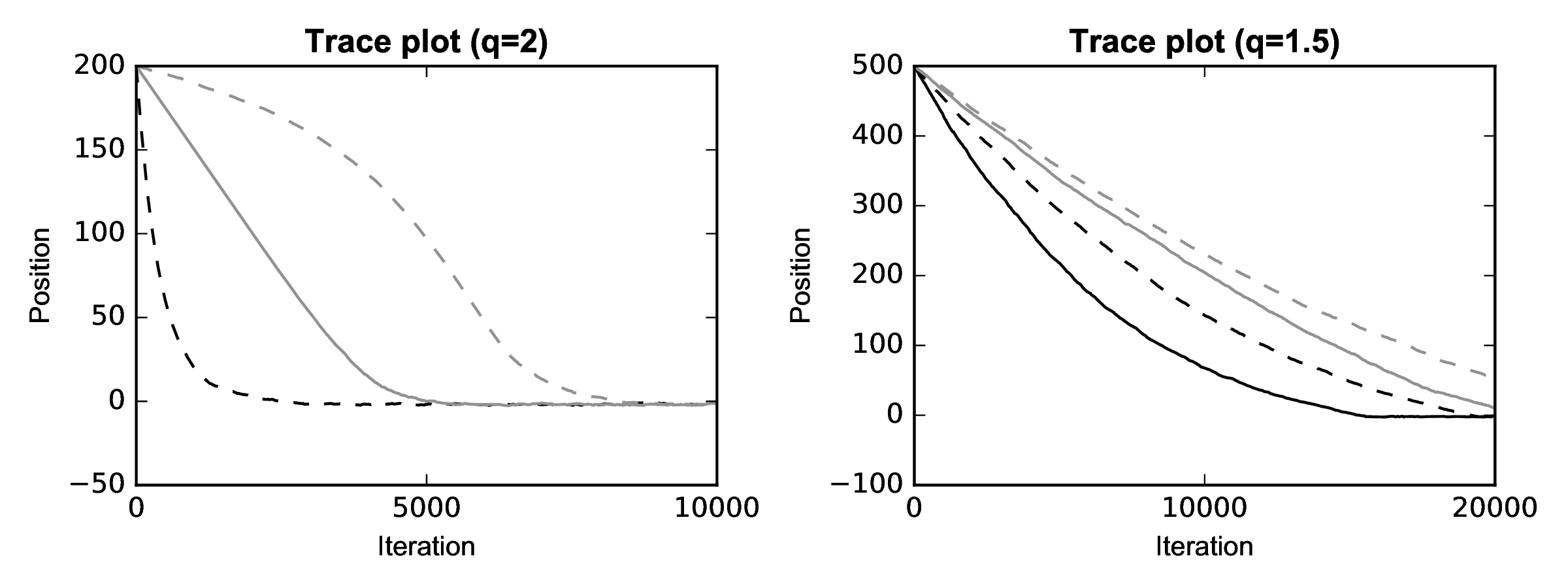}
\caption{Trace plots from the quantile regression studies. The solid black line is the exponential power momentum choice, dashed black lines are the Gaussian choice, solid grey lines are the Laplacian choice and dashed grey lines are the t distribution with 4 degrees of freedom.}
\label{fig2}
\end{figure}

\subsection{The Ginzburg--Landau model}

As a second example we take the model of phase transitions in condensed-matter physics proposed by Ginzburg and Landau \citep[Chapter~5]{goldenfeld1992lectures}.  We consider a three-dimensional $10^3$ lattice, where each site represents a random variable $\psi_{ijk} \in \R$.  The corresponding potential for the 1000-dimensional parameter $\psi$ is
\[
U(\psi) = \sum_{i,j,k} \left\{ \frac{(1-\tau)}{2} \psi_{ijk}^2 + \frac{\tau\alpha}{2} \|\tilde{\nabla}\psi_{ijk}\|^2 + \frac{\tau\lambda}{4} \psi_{ijk}^4 \right\},
\]
where $\alpha,\lambda, \tau > 0$ and $\tilde{\nabla}\psi_{ijk} = (\psi_{i_+ jk}-\psi_{ijk},\psi_{ij_+k}-\psi_{ijk},\psi_{ijk_+}-\psi_{ijk})$, where $i_+=(i+1)\mod(10)$, and $j_+,k_+$ are defined analogously.  The modular structure imposes periodic boundary conditions on the lattice.  When the parameter $\tau < 1$, the conditional distribution for each $\psi_{ijk}$ looks Gaussian in the centre of the space but with lighter tails.  The model exhibits a phase transition at $\tau = 1$, so that when $\tau > 1$ each conditional distribution is bi-modal.  The parameter $\alpha$ controls the strength of correlation between neighbouring lattice sites, with larger values making the sampling problem more challenging.  When $\tau < 1$, each $\psi_{ijk}$ is likely to be close to zero, while in the bi-modal phase they are more likely to be non-negligible in magnitude. When $\alpha$ is large, neighbouring parameters are likely to have the same sign.  The bi-modal phase therefore represents the system in its ordered state, whereas the system is disordered for $\tau <1$.

The inference problem is to estimate expectations with respect to the distribution with density proportional to $\exp\{ -U(\psi) \}$.  We generate samples using Hamiltonian Monte Carlo with four different choices of kinetic energy: the standard quadratic choice, the relativistic choice of \cite{lu2016relativistic}, the exponential power choice of \cite{zhang2016towards} and the relativistic power choice introduced in section \ref{subsec:grfamily}.  In the latter two cases we set the shape parameter $\beta = 4/3$, which results in a composite gradient that is linear when $\|\psi\|$ is large.

We perform two studies.  In the first we initialize samplers in the centre of the space and compute the effective sample sizes from 10,000 iterations.  In the second we initialize each $\psi_{ijk} \sim U[-10,10]$, and compute the number of iterations until $\max |\psi_{ijk}| \leq 2$.  Results are shown in Table \ref{table1}.  The kinetic energies with Gaussian implicit noise produce larger effective sample sizes in this example.  The reason is that there is a sharp drop in density when the quartic term begins to dominate the potential.  As the exponential power choice results in a bi-modal implicit noise, it is more likely to propose larger jumps which fall into this low density region, and are hence rejected.  To prevent this from happening too often a smaller step-size must be chosen than for the other methods.  Among the other choices the Gaussian has the highest effective sample size, though this cannot necessarily be relied upon as the sampler is not geometrically ergodic.  The slower speed for larger values of $\|\psi\|$ enforced in the relativistic and to a lesser degree relativistic power cases appears to slightly reduce efficiency here, which is sacrificed in favour of robustness.  We simply set each tuning parameter $\gamma_i = 1$ in our experiments.  This could be tuned in order to improve performance, but we preferred to use default values to limit the extra tuning required for this choice.  Sensitivity checks with $\gamma_i = 2$ and $3$ did not yield appreciably different results.  In terms of iterations to the centre, the Gaussian choice performs poorly as expected.  The remaining samplers perform similarly.  The relativistic power choice outperforms the exponential power choice because a smaller step-size was needed in the latter case for adequate mixing at equilibrium.  The relativistic choice, though slower, is not critically so, because $10$ is not very large in absolute terms, meaning the differences between a drift which is $O(1)$ and $O(\|\psi\|)$, while visible, is not substantial here.

\begin{table}
\centering
\def~{\hphantom{0}}
\begin{tabular}{lcccc}
 \\
& \multicolumn{3}{c}{Effective sample size} & Iterations to centre  \\[5pt]
& Minimum & Mean & Maximum & \\
Gaussian 							& 6,251 & 8,748 & 10,000  & N/A  \\
Relativistic Power $\beta = 4/3$ 	& 5,253 & 6,777 & 8,271   & 4.2  \\
Relativistic 						& 3,591 & 4,639 & 5,525   & 8.6 \\
Exponential Power $\beta = 4/3$ 	& 810 	& 1,108 & 1,303    & 11.9
\end{tabular}

\caption{ \label{table1}
Effective sample sizes at equilibrium and iterations until $\max_i|\psi_i|<2$ for Hamiltonian Monte Carlo on the Ginzburg--Landau model with $\alpha=0.1$, $\lambda=0.5$ and $\tau=2$.  Step sizes were chosen based on optimizing effective sample sizes.  In each simulation the number of leapfrog steps was set to 10, and results are averaged over 10 runs.}

\end{table}

\section{Discussion}
\label{sec:discussion}

We have described how changing the kinetic energy in Hamiltonian Monte Carlo affects performance.  In practice, several other strategies could be considered, such as mixtures of kinetic energies, or employing a delayed rejection approach as in \cite{mira2001metropolis}. In addition, other stochastic processes that utilise Hamiltonian dynamics, such as underdamped Langevin diffusions, can also be considered \citep{stoltz2016stable}.

In this work kinetic energies of the form $K(p) = \sum_i K_i(p_i)$ were used. The reason for this is that $\partial_i K(p)=\partial_i K_i(p_i)$, meaning that the when $\pi(x)$ is of the product form the Hamiltonian flows become independent. If a different choice were made, for example $K(p) = \beta^{-1}(1+\gamma^{-1}p^T p)^{\beta/2}$, then $\partial_i K(p) = (1+\gamma^{-1}p^{T}p)^{\beta/2-1}p_i$, so this does not always happen, here only when $\beta = 2$. It is unclear which of these approaches is preferable from a theoretical perspective, but evaluating $p^Tp$ is certainly a computational burden, and previous empirical evidence favours our approach \citep{lu2016relativistic}.

Outside the realm of separable Hamiltonians, in \cite{girolami2011riemann} the choice $K(x,p) = \{ \log|G(x)| + p^T G^{-1}(x) p \} / 2$ is advised.  Since an implicit integrator is required, the composite gradient intuition no longer directly applies.  Here instead
\[
x_{\e} = x_0 + \frac{\e}{2}\left[ \left\{ G^{-1}(x_0) + G^{-1}(x_{\e}) \right\} 
\left\{ p_0 - \frac{\e}{2}\nabla U(x_0) - \frac{\e}{2} \nabla f(x_0,p_{\frac{\e}{2}}) \right\} \right],
\]
where $G(x)$ is some Riemannian metric and $2f(x,p) = \log|G(x)| + p^T G^{-1}(x)p$.  Key drivers of the dynamics are the terms $G^{-1}(x_0)\nabla U(x_0)$ and $G^{-1}(x_{\e})\nabla U(x_0)$, which are often called the natural gradient \citep{amari2007methods}.  Given recent results \citep{taylor2015exact}, this approach may be advantageous when $\pi(\cdot)$ has heavy-tails.  The log-determinant term can also be beneficial \citep{betancourt2015hamiltonian}.

Another line of further study is to assess how kinetic energy choice affects dynamic implementations such as the No-U-Turn Sampler \citep{hoffman2014no}.  Results in simple cases suggest that such implementations are favourable when $\pi(\cdot)$ is heavy-tailed \citep{livingstone2016geometric}.

\section*{Acknowledgement}
We thank the reviewers for helping to improve the paper. SL thanks Michael Betancourt for sharing a calculation related to Proposition \ref{prop:period}, Matteo Fasiolo for advice on quantile regression, and Gabriel Stoltz, Zofia Trstanova and Paul Fearnhead for useful discussions. The authors acknowledge support from the Engineering and Physical Sciences Research Council for support through grants EP/K014463/1, i-like (SL and GOR) and EP/D002060/1, CRiSM (GOR).


\bibliographystyle{biometrika}
\bibliography{paper-ref}

\begin{thebibliography}{46}
\expandafter\ifx\csname natexlab\endcsname\relax\def\natexlab#1{#1}\fi

\bibitem[{Akhmatskaya \& Reich(2008)}]{akhmatskaya2008gshmc}
\textsc{Akhmatskaya, E.} \& \textsc{Reich, S.} (2008).
\newblock {GSHMC}: An efficient method for molecular simulation.
\newblock \textit{Journal of Computational Physics} \textbf{227}, 4934--4954.

\bibitem[{Amari \& Nagaoka(2007)}]{amari2007methods}
\textsc{Amari, S.-i.} \& \textsc{Nagaoka, H.} (2007).
\newblock \textit{Methods of information geometry}, vol. 191.
\newblock American Mathematical Soc.

\bibitem[{Barthelm\'{e} \& Chopin(2011)}]{barthelme2011discussion}
\textsc{Barthelm\'{e}, S.} \& \textsc{Chopin, N.} (2011).
\newblock Discussion on {R}iemann manifold {L}angevin and {H}amiltonian {M}onte
  {C}arlo methods.
\newblock \textit{Journal of the Royal Statistical Society: Series B}
  \textbf{73}, 163--164.

\bibitem[{Beskos et~al.(2013)Beskos, Pillai, Roberts, Sanz-Serna, Stuart
  et~al.}]{beskos2013optimal}
\textsc{Beskos, A.}, \textsc{Pillai, N.}, \textsc{Roberts, G.},
  \textsc{Sanz-Serna, J.-M.}, \textsc{Stuart, A.} et~al. (2013).
\newblock Optimal tuning of the hybrid {M}onte {C}arlo algorithm.
\newblock \textit{Bernoulli} \textbf{19}, 1501--1534.

\bibitem[{Betancourt et~al.(2017)Betancourt, Byrne, Livingstone \&
  Girolami}]{betancourt2017geometric}
\textsc{Betancourt, M.}, \textsc{Byrne, S.}, \textsc{Livingstone, S.} \&
  \textsc{Girolami, M.} (2017).
\newblock The geometric foundations of {H}amiltonian {M}onte {C}arlo.
\newblock \textit{Bernoulli} \textbf{23}, 2257--2298.

\bibitem[{Betancourt \& Girolami(2015)}]{betancourt2015hamiltonian}
\textsc{Betancourt, M.} \& \textsc{Girolami, M.} (2015).
\newblock Hamiltonian {M}onte {C}arlo for hierarchical models.
\newblock \textit{Current trends in Bayesian methodology with applications}
  \textbf{79}, 30.

\bibitem[{Bissiri et~al.(2016)Bissiri, Holmes \& Walker}]{bissiri2016general}
\textsc{Bissiri, P.~G.}, \textsc{Holmes, C.} \& \textsc{Walker, S.~G.} (2016).
\newblock A general framework for updating belief distributions.
\newblock \textit{Journal of the Royal Statistical Society: Series B
  (Statistical Methodology)} .

\bibitem[{Bou-Rabee et~al.(2017)Bou-Rabee, Sanz-Serna
  et~al.}]{bou2015randomized}
\textsc{Bou-Rabee, N.}, \textsc{Sanz-Serna, J.~M.} et~al. (2017).
\newblock Randomized hamiltonian monte carlo.
\newblock \textit{The Annals of Applied Probability} \textbf{27}, 2159--2194.

\bibitem[{Box \& Tiao(1973)}]{box1973bayesian}
\textsc{Box, G.~E.} \& \textsc{Tiao, G.~C.} (1973).
\newblock \textit{Bayesian inference in statistical analysis}.
\newblock Reading, Mass: Addison--Wesley.

\bibitem[{Chen et~al.(2014)Chen, Fox \& Guestrin}]{chen2014stochastic}
\textsc{Chen, T.}, \textsc{Fox, E.~B.} \& \textsc{Guestrin, C.} (2014).
\newblock Stochastic {G}radient {H}amiltonian {M}onte {C}arlo.
\newblock In \textit{ICML}.

\bibitem[{Dalalyan(2017)}]{dalalyan2017theoretical}
\textsc{Dalalyan, A.~S.} (2017).
\newblock Theoretical guarantees for approximate sampling from smooth and
  log-concave densities.
\newblock \textit{Journal of the Royal Statistical Society: Series B
  (Statistical Methodology)} \textbf{79}, 651--676.

\bibitem[{Dieudonn{\'e}(1961)}]{dieudonne1961foundations}
\textsc{Dieudonn{\'e}, J.} (1961).
\newblock \textit{Foundations of modern analysis}.
\newblock Academic Press New York.

\bibitem[{Fasiolo et~al.(2017)Fasiolo, Goude, Nedellec \&
  Wood}]{fasiolo2017fast}
\textsc{Fasiolo, M.}, \textsc{Goude, Y.}, \textsc{Nedellec, R.} \&
  \textsc{Wood, S.~N.} (2017).
\newblock Fast calibrated additive quantile regression.
\newblock \textit{arXiv preprint arXiv:1707.03307} .

\bibitem[{Frank \& Friedman(1993)}]{frank1993statistical}
\textsc{Frank, L.~E.} \& \textsc{Friedman, J.~H.} (1993).
\newblock A statistical view of some chemometrics regression tools.
\newblock \textit{Technometrics} \textbf{35}, 109--135.

\bibitem[{Gilks \& Wild(1992)}]{gilks1992adaptive}
\textsc{Gilks, W.~R.} \& \textsc{Wild, P.} (1992).
\newblock Adaptive rejection sampling for {G}ibbs sampling.
\newblock \textit{Applied Statistics} , 337--348.

\bibitem[{Girolami \& Calderhead(2011)}]{girolami2011riemann}
\textsc{Girolami, M.} \& \textsc{Calderhead, B.} (2011).
\newblock Riemann manifold {L}angevin and {H}amiltonian {M}onte {C}arlo
  methods.
\newblock \textit{Journal of the Royal Statistical Society: Series B
  (Statistical Methodology)} \textbf{73}, 123--214.

\bibitem[{Goldenfeld(1992)}]{goldenfeld1992lectures}
\textsc{Goldenfeld, N.} (1992).
\newblock \textit{Lectures on phase transitions and the renormalization group}.
\newblock Addison-Wesley, Advanced Book Program, Reading.

\bibitem[{Hoffman \& Gelman(2014)}]{hoffman2014no}
\textsc{Hoffman, M.~D.} \& \textsc{Gelman, A.} (2014).
\newblock The {N}o-{U}-turn sampler: adaptively setting path lengths in
  {H}amiltonian {M}onte {C}arlo.
\newblock \textit{Journal of Machine Learning Research} \textbf{15},
  1593--1623.

\bibitem[{Jarner \& Hansen(2000)}]{jarner2000geometric}
\textsc{Jarner, S.~F.} \& \textsc{Hansen, E.} (2000).
\newblock Geometric ergodicity of {M}etropolis algorithms.
\newblock \textit{Stochastic processes and their applications} \textbf{85},
  341--361.

\bibitem[{Jarner \& Roberts(2007)}]{jarner2007convergence}
\textsc{Jarner, S.~F.} \& \textsc{Roberts, G.~O.} (2007).
\newblock Convergence of {H}eavy-tailed {M}onte {C}arlo {M}arkov {C}hain
  {A}lgorithms.
\newblock \textit{Scandinavian Journal of Statistics} \textbf{34}, 781--815.

\bibitem[{Jarner \& Tweedie(2003)}]{jarner2003necessary}
\textsc{Jarner, S.~F.} \& \textsc{Tweedie, R.~L.} (2003).
\newblock Necessary conditions for geometric and polynomial ergodicity of
  random-walk-type {M}arkov chains.
\newblock \textit{Bernoulli} \textbf{9}, 559--578.

\bibitem[{Leimkuhler \& Reich(2004)}]{leimkuhler2004simulating}
\textsc{Leimkuhler, B.} \& \textsc{Reich, S.} (2004).
\newblock \textit{Simulating hamiltonian dynamics}, vol.~14.
\newblock Cambridge University Press.

\bibitem[{Liang \& M{\"u}ller-Kirsten(1992)}]{liang1992periodic}
\textsc{Liang, J.-Q.} \& \textsc{M{\"u}ller-Kirsten, H.} (1992).
\newblock Periodic instantons and quantum-mechanical tunneling at high energy.
\newblock \textit{Phys. Rev. D} \textbf{46}, 4685--4690.

\bibitem[{Lindsten \& Doucet(2016)}]{lindsten2016pseudo}
\textsc{Lindsten, F.} \& \textsc{Doucet, A.} (2016).
\newblock Pseudo-{M}arginal {H}amiltonian {M}onte {C}arlo.
\newblock \textit{arXiv preprint arXiv:1607.02516} .

\bibitem[{Livingstone et~al.(2016)Livingstone, Betancourt, Byrne \&
  Girolami}]{livingstone2016geometric}
\textsc{Livingstone, S.}, \textsc{Betancourt, M.}, \textsc{Byrne, S.} \&
  \textsc{Girolami, M.} (2016).
\newblock On the {G}eometric {E}rgodicity of {H}amiltonian {M}onte {C}arlo.
\newblock \textit{arXiv preprint arXiv:1601.08057} .

\bibitem[{Lu et~al.(2017)Lu, Perrone, Hasenclever, Teh \&
  Vollmer}]{lu2016relativistic}
\textsc{Lu, X.}, \textsc{Perrone, V.}, \textsc{Hasenclever, L.}, \textsc{Teh,
  Y.~W.} \& \textsc{Vollmer, S.~J.} (2017).
\newblock Relativistic {M}onte {C}arlo.
\newblock \textit{AISTATS 2017} .

\bibitem[{Meyn \& Tweedie(1993)}]{meyn1993markov}
\textsc{Meyn, S.~P.} \& \textsc{Tweedie, R.~L.} (1993).
\newblock \textit{Markov chains and stochastic stability}.
\newblock London: Springer--Verlag.

\bibitem[{Mineo(2014)}]{mineo2014normalp}
\textsc{Mineo, A.~M.} (2014).
\newblock \textit{normalp: {R}outines for {E}xponential {P}ower
  {D}istribution}.
\newblock R package version 0.7.0.

\bibitem[{Mira et~al.(2001)}]{mira2001metropolis}
\textsc{Mira, A.} et~al. (2001).
\newblock On {M}etropolis-{H}astings algorithms with delayed rejection.
\newblock \textit{Metron} \textbf{59}, 231--241.

\bibitem[{Neal(2003)}]{neal2003slice}
\textsc{Neal, R.~M.} (2003).
\newblock Slice sampling.
\newblock \textit{Annals of statistics} , 705--741.

\bibitem[{Neal(2011)}]{neal2011mcmc}
\textsc{Neal, R.~M.} (2011).
\newblock {M}{C}{M}{C} using {H}amiltonian dynamics.
\newblock \textit{Handbook of Markov Chain Monte Carlo} \textbf{2}, 113--162.

\bibitem[{Nishimura et~al.(2017)Nishimura, Dunson \&
  Lu}]{nishimura2017discontinuous}
\textsc{Nishimura, A.}, \textsc{Dunson, D.} \& \textsc{Lu, J.} (2017).
\newblock Discontinuous {H}amiltonian {M}onte {C}arlo for sampling discrete
  parameters.
\newblock \textit{arXiv preprint arXiv:1705.08510} .

\bibitem[{Okudo \& Suzuki(2016)}]{okudo2016hamiltonian}
\textsc{Okudo, M.} \& \textsc{Suzuki, H.} (2016).
\newblock Hamiltonian {M}onte {C}arlo with {E}xplicit, {R}eversible, and
  {V}olume-preserving {A}daptive {S}tep {S}ize {C}ontrol.
\newblock \textit{Technical report} .

\bibitem[{Park \& Casella(2008)}]{park2008bayesian}
\textsc{Park, T.} \& \textsc{Casella, G.} (2008).
\newblock The bayesian lasso.
\newblock \textit{Journal of the American Statistical Association}
  \textbf{103}, 681--686.

\bibitem[{Roberts \& Rosenthal(2001)}]{roberts2001optimal}
\textsc{Roberts, G.~O.} \& \textsc{Rosenthal, J.~S.} (2001).
\newblock Optimal scaling for various {M}etropolis-{H}astings algorithms.
\newblock \textit{Statistical science} \textbf{16}, 351--367.

\bibitem[{Roberts \& Rosenthal(2004)}]{roberts2004general}
\textsc{Roberts, G.~O.} \& \textsc{Rosenthal, J.~S.} (2004).
\newblock General state space {M}arkov chains and {MCMC} algorithms.
\newblock \textit{Probability Surveys} \textbf{1}, 20--71.

\bibitem[{Roberts \& Tweedie(1996{\natexlab{a}})}]{roberts1996exponential}
\textsc{Roberts, G.~O.} \& \textsc{Tweedie, R.~L.} (1996{\natexlab{a}}).
\newblock Exponential convergence of {L}angevin distributions and their
  discrete approximations.
\newblock \textit{Bernoulli} , 341--363.

\bibitem[{Roberts \& Tweedie(1996{\natexlab{b}})}]{roberts1996geometric}
\textsc{Roberts, G.~O.} \& \textsc{Tweedie, R.~L.} (1996{\natexlab{b}}).
\newblock Geometric convergence and central limit theorems for multidimensional
  {H}astings and {M}etropolis algorithms.
\newblock \textit{Biometrika} , 95--110.

\bibitem[{Stephens(2011)}]{stephens2011discussion}
\textsc{Stephens, D.} (2011).
\newblock Discussion on {R}iemann manifold {L}angevin and {H}amiltonian {M}onte
  {C}arlo methods.
\newblock \textit{Journal of the Royal Statistical Society: Series B}
  \textbf{73}, 197.

\bibitem[{Stoltz \& Trstanova(2016)}]{stoltz2016stable}
\textsc{Stoltz, G.} \& \textsc{Trstanova, Z.} (2016).
\newblock Stable and accurate schemes for {L}angevin dynamics with general
  kinetic energies.
\newblock \textit{arXiv preprint arXiv:1609.02891} .

\bibitem[{Stramer \& Tweedie(1999)}]{stramer1999langevin}
\textsc{Stramer, O.} \& \textsc{Tweedie, R.} (1999).
\newblock Langevin-type models ii: self-targeting candidates for {M}{C}{M}{C}
  algorithms.
\newblock \textit{Methodology and Computing in Applied Probability} \textbf{1},
  307--328.

\bibitem[{Strathmann et~al.(2015)Strathmann, Sejdinovic, Livingstone, Szabo \&
  Gretton}]{strathmann2015gradient}
\textsc{Strathmann, H.}, \textsc{Sejdinovic, D.}, \textsc{Livingstone, S.},
  \textsc{Szabo, Z.} \& \textsc{Gretton, A.} (2015).
\newblock Gradient-free {H}amiltonian {M}onte {C}arlo with efficient kernel
  exponential families.
\newblock In \textit{Advances in Neural Information Processing Systems}.

\bibitem[{Taylor(2015)}]{taylor2015exact}
\textsc{Taylor, K.} (2015).
\newblock \textit{Exact {A}lgorithms for simulation of diffusions with
  discontinuous drift and robust {C}urvature {M}etropolis-adjusted {L}angevin
  algorithms}.
\newblock Ph.D. thesis, University of Warwick.

\bibitem[{Yang \& Rodr{\'\i}guez(2013)}]{yang2013searching}
\textsc{Yang, Z.} \& \textsc{Rodr{\'\i}guez, C.~E.} (2013).
\newblock Searching for efficient {M}arkov chain {M}onte {C}arlo proposal
  kernels.
\newblock \textit{Proceedings of the National Academy of Sciences}
  \textbf{110}, 19307--19312.

\bibitem[{Yu \& Moyeed(2001)}]{yu2001bayesian}
\textsc{Yu, K.} \& \textsc{Moyeed, R.~A.} (2001).
\newblock Bayesian quantile regression.
\newblock \textit{Statistics \& Probability Letters} \textbf{54}, 437--447.

\bibitem[{Zhang et~al.(2016)Zhang, Wang, Chen, Henao, Fan \&
  Carin}]{zhang2016towards}
\textsc{Zhang, Y.}, \textsc{Wang, X.}, \textsc{Chen, C.}, \textsc{Henao, R.},
  \textsc{Fan, K.} \& \textsc{Carin, L.} (2016).
\newblock {Towards Unifying Hamiltonian Monte Carlo and Slice Sampling}.
\newblock In \textit{Advances in Neural Information Processing Systems}.

\end{thebibliography}

\appendix

\section{Proofs of propositions}
\label{app:prop}

\begin{proof}[Proof of proposition \ref{prop:lackge}]
(i) In Lemma B\ref{lem:ht1} we show that $\int e^{\gamma\|x\|-U(x)}dx = \infty$ for any $\gamma>0$,
and in Lemma B\ref{lem:ht2} that for any $\eta>0$ there is an $r<\infty$
such that $P\{x,B_r(x)\}>1-\eta$. Theorem 2.2 of \cite{jarner2003necessary}
establishes that if these two conditions hold then the resulting Markov
chain cannot be geometrically ergodic.

(ii) Lemmas B\ref{lem:1stepbound}, B\ref{lem:Lstepbound} and B\ref{lem:pbound}
show that when \eqref{eq:cond1}-\eqref{eq:cond5b}  hold, with
probability one $\lim_{\|x_{0}\|\to\infty}\triangle(x_{0},p_{0})=\infty$,
where $\triangle(x_{0,}p_{0})=(\|x_{L\varepsilon}\|+\|p_{L\varepsilon}\|)-(\|x_{0}\|+\|p_{0}\|)$.
Under \eqref{eq:cond3} this implies that with probability one $\lim_{\|x_{0}\|\to\infty}\triangle H(x_{0},p_{0})=\infty$,
where $\triangle H(x_{0},p_{0})=H(x_{L\varepsilon},p_{L\varepsilon})-H(x_{0},p_{0})$.
This in turn implies that with probability one $\lim_{\|x_{0}\|\to\infty}\alpha(x_{0},x_{L\varepsilon})=0$,
which, using Proposition 5.1 of \cite{roberts1996geometric}, establishes the result.
\end{proof}
\begin{proof}[Proof of proposition \ref{prop:stability}]
For the first part, note that the assumptions imply 
\begin{align*}
\|\nabla K\circ\nabla U(x)\|\leq C(A\|x\|^{q}+B)^{1/q}+D,
\end{align*}
which implies $\limsup_{\|x\|\to\infty}\|\nabla K\circ\nabla U(x)\|/\|x\|<\infty$
as required. We prove the second part by induction. Precisely, we
show that assuming $\|p_{i\e}\|\leq E_{i}\|x_{i\e}\|^{q}+F_{i}$
for some $E_{i},F_{i}<\infty$ implies $\|p_{(i+1)\e}\|\leq E_{i+1}\|x_{i\e}\|^{q}+F_{i+1}$
and $\|x_{(i+1)\e}\|\leq G_{i}\|x_{i\e}\|+H_{i}$ for $E_{i+1},F_{i+1},G_{i},H_{i}<\infty$.
These in turn imply the result. First note that 
\begin{align*}
\|x_{(i+1)\e}-x_{i\e}\| & =\varepsilon\|\nabla K\{p_{i\e}-\frac{\varepsilon}{2}\nabla U(x_{i\e})\}\|\\
 & \leq\varepsilon C\|p_{i\e}-\frac{\varepsilon}{2}\nabla U(x_{i\e})\|^{1/q}+\varepsilon D\\
 & \leq\varepsilon C\left\{ \|p_{i\e}\|+\frac{\varepsilon}{2}\|\nabla U(x_{i\e})\|\right\} ^{1/q}+\varepsilon D.
\end{align*}
Using $\|\nabla U(x_{i\e})\|\leq A\|x_{i\e}\|^{q}+B$ gives 
\begin{align*}
\|x_{(i+1)\e}\| & \leq\|x_{i\e}\|+\varepsilon C\left\{ (E_{i}+\varepsilon A/2)\|x_{i\e}\|^{q}+\varepsilon B/2\right\} ^{1/q}+\varepsilon D.
\end{align*}
Given this we can choose $G_{i}=\varepsilon C(E_{i}+\varepsilon A/2+\varepsilon B/2)^{1/q}+1$
and $H_{i}=\varepsilon C(E_{i}+\varepsilon A/2+\varepsilon B/2)^{1/q}+\varepsilon D$
to see that
\[
\|x_{(i+1)\e}\|\leq G_{i}\|x_{i\e}\|+H_{i}.
\]
Iterating gives
\[
\|x_{(i+1)\e}\| \leq \mathbf{G}_L \|x_0\| + \mathbf{H}_L,
\]
where $\mathbf{G}_L = G_{L-1}G_{L-2}...G_0$ and $\mathbf{H}_L = H_{L-1} + G_{L-1}H_{L-2} + G_{L-1}G_{L-2}H_{L-3} + ... + G_{L-1}...G_1H_0$.  Next recall that
\begin{align*}
\|p_{(i+1)\e}-p_{i\e}\| & =\frac{\varepsilon}{2}\|\nabla U(x_{i\e})+\nabla U(x_{(i+1)\e})\|\\
 & \leq\frac{\varepsilon}{2}\left\{ \|\nabla U(x_{i\e})\|+\|\nabla U(x_{(i+1)\e})\|\right\} \\
 & \leq\frac{\varepsilon}{2}\left(A\|x_{i\e}\|^{q}+A\|x_{(i+1)\e}\|^{q}+2B\right)\\
 & \leq\frac{\varepsilon}{2}\left\{ A\|x_{i\e}\|^{q}+A(G_{i}\|x_{i\e}\|+H_{i})^{q}+2B\right\} \\
 & \leq\frac{\varepsilon}{2}\left[A\left\{ 1+(G_{i}+H_{i})^{q}\right\} \|x_{i\e}\|^{q}+A(G_{i}+H_{i})^{q}+2B\right].
\end{align*}
Combining with the assumption that $\|p_{i\e}\|\leq E_{i}\|x_{i\e}\|^{q}+F_{i}$,
gives
\[
\|p_{(i+1)\e}\|\leq\left[E_{i}+\frac{\varepsilon}{2}A\left\{ 1+(G_{i}+H_{i})^{q}\right\} \right]\|x_{i\e}\|^{q}+\frac{\varepsilon}{2}\left\{ A(G_{i}+H_{i})^{q}+2B\right\} +F_{i}.
\]
Setting $E_{i+1}=\left[E_{i}+\varepsilon A\left\{ 1+(G_{i}+H_{i})^{q}/2\right\} \right]$
and $F_{i+1}=\varepsilon\left\{ A(G_{i}+H_{i})^{q}+2B\right\} /2+F_{i}$
then gives $\|p_{(i+1)\e}\|\leq E_{i+1}\|x_{i\e}\|^q+F_{i+1}$. Iterating then gives $\|p_{L\e}\| \leq E_L\|x_0\|^q+F_L$. Recalling that $\|p_0\| \leq E_0\|x_0\|^q + F_0$ by assumption completes the proof.
\end{proof}

\begin{proof}[Proof of proposition \ref{prop:negmove}]
Consider the event $B = \{ 4\|p_0\| \leq \e\|\nabla U(x_0)\| \}$, and note that $\lim_{\|x\| \to \infty} \text{pr}(B) = 1$.  We use the facts that  $\|x_{L\e}-x_{0}\|\leq\sum_{i=1}^{L-1}\|x_{(i+1)\e}-x_{i\e}\|$, and that for any $i \in \{0,...,L-1\}$
\begin{equation} \label{eqn:diff}
\| x_{(i+1)\e} - x_{i\e} \| = \e \|\nabla K( p_{\frac{2i+1}{2}\e} ) \|.
\end{equation}
Taking $i=0$ gives
\[
\| x_\e - x_0 \| = \e \|\nabla K \left\{ p_0 - \e\nabla U(x_0)/2 \right\} \|.
\]
Since $4\| p_0 - \e\nabla U(x_0)/2\| \geq \e\|\nabla U(x_0)\|$ under $B$, it follows from the fact that $\pi(\cdot)$ is light-tailed and $\nu(\cdot)$ heavy-tailed that for every $\delta >0$ there is an $M<\infty$ such that whenever $\|x_0\| > M$ then $\| \nabla K(p_{\e/2}) \| < \delta/\e$.  Thus $\|x_\e - x_0\|$ can be made arbitrarily small by choosing an $x_0$ with large enough norm.

Recall that $\nabla U(x)$ is continuous by assumption.  It follows from the preceding argument that for any $\gamma_1>0$ we can choose an $x_0$ with large enough norm that $\|\nabla U(x_\e) - \nabla U(x_0)\| < \gamma_1$ under $B$.

To complete the proof we show that if $\sum_{j=1}^{i}\|x_{j\e} - x_{(j-1)\e}\| < \delta/2$ then $\|x_{(i+1)\e} - x_{i\e} \| \leq \delta/2$ under $B$.  Combining this with the previous paragraphs establishes that for any $\delta>0$ then there is an $x_0$ with large enough norm that $\|x_{L\e} - x_0\|<\delta$ if event $B$ holds, establishing the result.

From equation \eqref{eqn:diff} the key factor in controlling $\|x_{(i+1)\e} - x_{i\e}\|$ is $\|p_{(i+1/2)\e}\|$, which can be lower bounded using
\begin{equation}
\|p_{(i+1/2)\e}\| \geq \frac{2i+1}{2}\e \| \nabla U(x_0) \| + \e\sum_{j=1}^i \|\nabla U(x_{j\e}) - \nabla U(x_0)\| - \|p_0\|
\end{equation}
If for any $\delta>0$ we can choose an $x_0$ with large enough norm that $\sum_{j=1}^{i}\|x_{j\e} - x_{(j-1)\e}\| < \delta/2$ then $\sum_{j=1}^i \|\nabla U(x_{j\e}) - \nabla U(x_0)\|$ can be made arbitrarily small through the same continuity argument made above.  Thus, under $B$ it holds that $\|p_{(i+1/2)\e}\| \geq i\e \| \nabla U(x_0) \|$, from which it follows that $\|x_{(i+1)\e} - x_{i\e} \|$ can be made arbitrarily small by choosing $\|x_0\|$ large enough.
\end{proof}

\begin{proof}[Proof of proposition \ref{prop:lyapunov}]
It is shown in chapter 16 of \cite{meyn1993markov} that a geometric convergence bound is equivalent to the drift condition $\int V(y)P(x,dy) \leq \lambda V(x)$ whenever $x$ is outside some small set $C$, where $\lambda <1$.  Lemma B\ref{lem:compact} establishes that if \eqref{eqn:negmove} holds then any small set must be bounded.  Hence if a geometric bound holds here then
\begin{equation} \label{eqn:limbound}
\limsup_{\|x\| \to \infty} \frac{\int V(y)P(x,dy)}{V(x)} < 1.
\end{equation}
For any $\delta>0$ we can write
\begin{align*}
\int V(y)P(x,dy) &= \int_{B_\delta(x)} V(y)P(x,dy) + \int_{B_\delta^c(x)}V(y)P(x,dy), \\
&\geq \int_{B_\delta(x)} V(y)P(x,dy) + \epsilon,
\end{align*}
where $\epsilon = P\{x,B_\delta^c(x)\}$.  If (i) holds then we can choose a $\delta<\delta'$, so that 
\begin{align*}
\int_{\mathcal{B}_{\delta}(x)}e^{\log V(y)-\log V(x)}P(x,dy)+\epsilon  \geq\int_{\mathcal{B}_{\delta}(x)}e^{-\epsilon'}P(x,dy)+\epsilon 
  =e^{-\epsilon'}(1-\epsilon)+\epsilon.
\end{align*}
Noting that both $\epsilon$ and $\epsilon'$
can be made arbitrarily small as $\|x\|\to\infty$, this expression
tends to $1$ in the same limit, proving the result.  If (ii) holds, note that $\liminf_{\|x\|\to\infty}V(x)e^{-s\|x\|}=c$
implies that $\forall \epsilon'>0$ there is an $M<\infty$ such that $V(x)e^{-s\|x\|}\geq c-\epsilon'$
whenever $\|x\|\geq M$. This means that when $\|x\|>M$
\[
\int_{\mathcal{B}_{\delta}(x)}V(y)P(x,dy)+\epsilon\geq(c-\epsilon')\int_{\mathcal{B}_{\delta}(x)} e^{-s\|y\|}P(x,dy)+\epsilon.
\]
Condition (ii) also implies that for all $\epsilon'>0$, there is
a sequence $\{x_{i}\}_{i \geq 1}$ for which $\|x_{i}\|\to\infty$
as $i\to\infty$ such that whenever $i\geq N$ for some $N<\infty$
then $\|x_{i}\|>M$ and the condition $V(x_{i})e^{-s\|x_{i}\|}\leq c+\epsilon'$
holds. Combining gives that for all $i\geq N$
\[
\int\frac{V(y)}{V(x_{i})}P(x_{i},dy)\geq  \frac{(c-\epsilon')}{(c+\epsilon')} e^{-s\delta}(1-\epsilon)+\frac{\epsilon}{V(x_i)}.
\]
Since $\epsilon,\epsilon'$ and $\delta$ can all be made arbitrarily small and $V(x_{i})\to\infty$
as $\|x_{i}\|\to\infty$, then this proves the result.
\end{proof}

\begin{proof}[Proof of proposition \ref{prop:period}]
Assume $H(x_{0},p_{0})=E$ and $x_{0}=0$, $p_{0}=\left(\beta E\right)^{\frac{1}{\beta}}$.
Take $4T$ to be the period length, and note that by the symmetry of the Hamiltonian in question this implies that $p_{T}=0$ and $x_{T}=\left(\alpha E\right)^{\frac{1}{\alpha}}$.
Then 
\[
\mathcal{P}(E)=4\int_{0}^{T}dt = 4\int_{0}^{x_{T}}\frac{dt}{dx_{t}}dx_{t}=4\int_{0}^{x_{T}}p_{t}^{1-\beta}dx_{t}.
\]
Setting $b=(1-\beta)/\beta$, $c_{\beta}=\beta^{b}$ and noting
that $p_{t}^{1-\beta}=c_{\beta}(E-\alpha^{-1}x_{t}^{\alpha})^{b}$
for $t\in[0,T]$, then the expression can be written 
\[
\mathcal{P}(E)=4c_{\beta}\int_{0}^{x_{T}}\left(E-\alpha^{-1}x_{t}^{\alpha}\right)^{b}dx_{t}.
\]
Applying the change of variables $y_{t}=(\alpha E)^{-1/\alpha}x_{t}$
and setting $c_{\alpha}=\alpha^{1/\alpha}$ gives 
\begin{align*}
P(E)=4c_{\beta}c_{\alpha}E^{b+1/\alpha}\int_{0}^{1}(1-y_{t}^{\alpha})^{b}dy_{t},
\end{align*}
where . Now, we have that $\mathcal{P}(E)=f(E^{\eta})$, for some function $f$, where 
\[
\eta=\frac{1-\beta}{\beta}+\frac{1}{\alpha}=\frac{1-(\beta-1)(\alpha-1)}{\alpha\beta}.
\]
Setting $\alpha=1+\gamma$ and $\beta=1+\gamma^{-1}$ for some $\gamma>0$
gives 
\[
\eta=\frac{1-\gamma\gamma^{-1}}{(1+\gamma)(1+\gamma^{-1})}=0,
\]
as required.
\end{proof}

\begin{proof}[Proof of proposition \ref{prop:robust}]
Set $\gamma(x)=\min\left[ \frac{\e}{4}\|\nabla U(x)\|, \left\|\nabla^2 K\left\{\frac{\e}{4}\nabla U(x)\right\}\right\|^{-1/2} \right]$, and note that $\lim_{\|x\| \to \infty}\text{pr}_\nu\left\{\|p\|\leq \gamma(x)\right\}=1$.  For $\|p\|\leq \gamma(x)$, as a direct consequence of the mean value inequality \citep{dieudonne1961foundations}
\[
\left\| \nabla K\left\{ \frac{\e}{2}\nabla U(x) - p \right\} - \nabla K\left\{\frac{\e}{2}\nabla U(x)\right\} \right\| \leq M(x)\|p\|,
\]
where $M(x) = \sup_{\{4\|p\|\geq \e\|\nabla U(x)\|\}}\|\nabla^2 K(p)\|$. As the right-hand side tends to $0$ as $\|x\| \to \infty$, then the result follows.
\end{proof}

\section{Technical results}
\label{app:tech}

\begin{lemma} \label{lem:ht1}
If $\pi(\cdot)$ is heavy-tailed then for every $\gamma >0$
\[
\int e^{\gamma\|x\|-U(x)}dx = \infty.
\]
\end{lemma}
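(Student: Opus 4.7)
The plan is to use the heavy-tail assumption $\lim_{\|x\|\to\infty}\|\nabla U(x)\|=0$ to bound $U(x)$ above by an affine function of $\|x\|$ with an arbitrarily small slope, and then observe that when this slope is smaller than $\gamma$ the integrand $e^{\gamma\|x\|-U(x)}$ grows exponentially in $\|x\|$, forcing the integral to diverge.

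Concretely, first I would fix $\gamma > 0$ and use the heavy-tail definition to choose $R < \infty$ such that $\|\nabla U(y)\| \leq \gamma/2$ for all $\|y\| \geq R$. Since $U$ is $C^1$, the quantity $M := \sup_{\|y\|=R}U(y)$ is finite by compactness of the sphere. Next, for any $x$ with $\|x\| > R$, write $x = \|x\|v$ with $v \in S^{d-1}$ and apply the fundamental theorem of calculus along the radial segment from $Rv$ to $x$:
\[
U(x) - U(Rv) = \int_R^{\|x\|} \nabla U(sv)\cdot v\, ds,
\]
so that $|U(x) - U(Rv)| \leq (\gamma/2)(\|x\|-R)$. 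Combining with $U(Rv) \leq M$ yields the pointwise bound
\[
U(x) \leq M + \tfrac{\gamma}{2}(\|x\| - R) \qquad \text{whenever } \|x\| \geq R.
\]

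The last step is to plug this into the integrand. Rearranging gives, for $\|x\| \geq R$,
\[
\gamma\|x\| - U(x) \;\geq\; \tfrac{\gamma}{2}\|x\| + \tfrac{\gamma R}{2} - M,
\]
and hence $e^{\gamma\|x\|-U(x)} \geq c\, e^{\gamma\|x\|/2}$ for a positive constant $c = e^{\gamma R/2 - M}$. Restricting the integral in question to $\{\|x\|\geq R\}$ and passing to polar coordinates,
\[
\int_{\|x\|\geq R} e^{\gamma\|x\|-U(x)}\,dx \;\geq\; c\,\omega_{d-1}\int_R^\infty e^{\gamma t/2}\,t^{d-1}\,dt \;=\; \infty,
\]
where $\omega_{d-1}$ is the surface area of the unit sphere. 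This gives the conclusion.

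There is no real obstacle here; the only substantive point is recognising that the heavy-tail hypothesis gives sub-linear growth of $U$ with arbitrarily small slope, which defeats any prescribed linear factor $\gamma\|x\|$. Care only needs to be taken to ensure the radial segment on which the mean value bound is applied lies entirely in the region $\{\|y\|\geq R\}$ where the gradient bound is valid, which is automatic for the radial segment from $Rv$ to $\|x\|v$.
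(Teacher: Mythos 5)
Your proof is correct and follows essentially the same route as the paper: both bound $U$ above by $M+\delta\|x\|$ outside a ball using the vanishing-gradient hypothesis (with $\delta<\gamma$; you take $\delta=\gamma/2$) and conclude the integrand dominates a divergent exponential. You have merely spelled out the radial fundamental-theorem-of-calculus step and the polar-coordinate divergence that the paper leaves implicit.
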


\begin{proof}
Choose $\delta<\gamma$. Let $B$ be a Euclidean ball centred at the origin such that $\|\nabla U(x)\|\leq \delta$ whenever $x \not\in B$. By continuity of $U(x)$, there is an $M<\infty$ such that $U(x) \leq M$ for all $x \in \partial B$. Then for all $x \not\in B$ the integrand is bounded below by $e^{(\gamma - \delta)\|x\| - M}$, which diverges uniformly and hence is not integrable.
\end{proof}

\begin{lemma} \label{lem:ht2}
If $\pi(x)$ is heavy-tailed then for any $\eta>0$ there is an $r<\infty$ such that 
\[
P\{x,B_{r}(x)\}>1-\eta.
\]
\end{lemma}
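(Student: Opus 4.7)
The plan is to prove the statement uniformly in $x$ by exploiting that a heavy-tailed target forces $\nabla U$ to be uniformly bounded on $\mathbb{R}^d$. This tames the leapfrog dynamics enough that the proposed jump $\|x^*-x_0\|$ is, with high probability, bounded by a constant independent of $x_0$. Since a rejection in Algorithm \ref{al1} keeps the chain at $x\in B_r(x)$, one has $P\{x,B_r(x)\}\geq\text{pr}(\|x^*-x\|\leq r)$, so it suffices to control the proposal.

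The key step is the uniform bound on $\nabla U$. By heavy-tailedness, $\|\nabla U(x)\|\to 0$ as $\|x\|\to\infty$, so $\nabla U$ is bounded outside some ball, and on the complementary compact set it is bounded by continuity since $U\in C^1$. Let $B:=\sup_{x\in\mathbb{R}^d}\|\nabla U(x)\|<\infty$. From the leapfrog recursions \eqref{eqn:compgrad}--\eqref{eqn:momentum_update}, each half-step momentum obeys
\[
\|p_{(i+1/2)\e}\|\leq \|p_0\|+\bigl(i+\tfrac{1}{2}\bigr)\e B\leq \|p_0\|+L\e B,
\]
and continuity of $\nabla K$ makes $M(R,L):=\sup_{\|p\|\leq R+L\e B}\|\nabla K(p)\|$ finite. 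A triangle-inequality telescoping then yields
\[
\|x^*-x_0\|\leq \e\sum_{i=0}^{L-1}\|\nabla K(p_{(i+1/2)\e})\|\leq L\e\,M(\|p_0\|,L),
\]
with the right-hand side independent of $x_0$.

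To finish, since $\nu$ and $\Psi$ are probability distributions on $\mathbb{R}^d$ and $\mathbb{N}$ respectively, for any $\eta>0$ I would choose $R,L^*<\infty$ with $\nu\{\|p_0\|>R\}<\eta/2$ and $\Psi\{L>L^*\}<\eta/2$. Setting $r:=L^*\e\,M(R,L^*)$, the event $\{\|p_0\|\leq R\}\cap\{L\leq L^*\}$ has probability exceeding $1-\eta$, on which $\|x^*-x_0\|\leq r$, so $P\{x,B_r(x)\}>1-\eta$ as required. The only delicate point is the uniform boundedness of $\nabla U$; once that is in hand, the remainder is a routine combination of continuity and tightness of $\nu$ and $\Psi$.
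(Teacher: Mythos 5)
Your proof is correct and rests on the same two pillars as the paper's: heavy-tailedness plus continuity give a uniform bound on $\nabla U$, which combined with local boundedness of $\nabla K$ and tightness of $\nu$ (and $\Psi$) makes the proposal increment tight uniformly in $x$. If anything, your version is more explicit than the paper's, which only writes out a single leapfrog step and appeals to preservation of tightness under locally bounded pushforwards, whereas you carry the bound $\|p_{(i+1/2)\e}\|\leq\|p_0\|+L\e B$ through all $L$ steps and handle the randomness of $L$ directly.
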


\begin{proof}
We need to show that $Q\{x,B_{r}(x)\}>1-\eta$,
for any $x$. After one leapfrog step we have 
\begin{align*}
x_{\e} & =x_{0}+\e\nabla K\left\{p_{0}-\frac{\e}{2}\nabla U(x_{0})\right\},\\
p_{\e} & =p_{0}-\frac{\e}{2}\nabla U(x_{0})-\frac{\e}{2}\nabla U(x_{\e}).
\end{align*}
Write $\|x\|_{\infty}$ for the supremum norm, and note that by equivalence
of norms in finite dimensions we can write $\|x\|_{\infty}\leq C\|x\|$
for all $x$, for some $C<\infty$. We have that $\nabla U(x)\in C_{0}(\R^{d})$,
which implies $\|\nabla U(x)\|<M/C$ for some $M<\infty$ which does
not depend on $x$, so that $\|\nabla U(x)\|_{\infty}<M$. The class of distributions for $\{p_0 - \e \nabla U(x_0)/2 \}$ is therefore tight. Now recall that if $f$ is a locally bounded function, and $\mathcal{F}$ a tight family of probability measures, then the resulting family of probability measures induced by pushing forward each element of $\mathcal{F}$ through $f$ is also tight.  So since $\nabla K$ is continuous and hence locally bounded, the result follows.
\end{proof}

\begin{lemma}\label{lem:kgrowth}
If \eqref{eq:cond1} and \eqref{eq:cond4} hold then 
\begin{equation}
\lim_{\|x\|\to\infty}\frac{\|\nabla K\{\frac{\varepsilon}{4}\nabla U(x)\}\|}{\|x\|}=\infty.
\label{eq:explode}
\end{equation}
\end{lemma}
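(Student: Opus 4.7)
The plan is to derive this directly from the two hypotheses by chaining them. Condition \eqref{eq:cond4} is stated uniformly over $x$: for every $\delta<\infty$, there is some $C_\delta>0$ with
\[
\frac{\|\nabla K\{\delta\nabla U(x)\}\|}{\|\nabla K\circ\nabla U(x)\|}\geq C_\delta
\]
for every $x\in\R^d$. The natural move is simply to instantiate this with $\delta=\varepsilon/4$, producing a strictly positive constant $C_{\varepsilon/4}$ such that $\|\nabla K\{\tfrac{\varepsilon}{4}\nabla U(x)\}\|\geq C_{\varepsilon/4}\|\nabla K\circ\nabla U(x)\|$ uniformly in $x$.

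Dividing both sides by $\|x\|$ and taking $\|x\|\to\infty$, the factor on the right blows up by hypothesis \eqref{eq:cond1}, and multiplying a divergent positive quantity by a strictly positive constant leaves it divergent. This gives the claim.

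There is really no obstacle: the lemma is a one-line algebraic consequence once one recognises that \eqref{eq:cond4} lets us compare $\nabla K$ evaluated at an arbitrary positive multiple of $\nabla U(x)$ with $\nabla K\circ\nabla U(x)$ itself, up to a multiplicative constant. The only thing to watch out for is to make sure the constant $C_{\varepsilon/4}$ is taken from \eqref{eq:cond4} before the limit is taken, so that it does not depend on $x$; otherwise the conclusion could fail. Since the lemma is likely invoked in the proof of Proposition \ref{prop:lackge} to rule out geometric ergodicity in case (ii), keeping the statement in this clean ratio form is exactly what downstream arguments will need.
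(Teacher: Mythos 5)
Your argument is correct and is essentially identical to the paper's proof: both factor the quantity as the ratio $\|\nabla K\{\tfrac{\varepsilon}{4}\nabla U(x)\}\|/\|\nabla K\circ\nabla U(x)\|$, bounded below by the uniform constant $C_{\varepsilon/4}$ from \eqref{eq:cond4}, times $\|\nabla K\circ\nabla U(x)\|/\|x\|$, which diverges by \eqref{eq:cond1}. Your remark that the constant must be fixed uniformly in $x$ before taking the limit is exactly the right point of care.
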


\begin{proof} First we re-write the expression
\begin{equation*}
\lim_{\|x\|\to\infty}\frac{\|\nabla K\{\frac{\varepsilon}{4}\nabla U(x)\}\|}{\|x\|}
= \lim_{\|x\|\to\infty}\frac{\|\nabla K\{\frac{\varepsilon}{4}\nabla U(x)\}\|}{\|\nabla K\circ \nabla U(x)\|} \frac{\|\nabla K\circ \nabla U(x)\|}{\|x\|},
\end{equation*}
Now, \eqref{eq:cond4} implies that the first term will be bounded below by a finite positive constant, while \eqref{eq:cond1} ensures that the second will have an infinite limit, proving the result.
\end{proof}

\begin{lemma}
\label{lem:1stepbound} If $\pi(\cdot)$ is light-tailed, \eqref{eq:cond1} and either of \eqref{eq:cond5a} or \eqref{eq:cond5b} hold and $\|p_{0}\|\leq \frac{\varepsilon}{4}\min\{\|\nabla U(x_{0})\|,\|\nabla U(x_0)\|_\infty\}$ then there is a $\gamma_M < \infty$ such that, provided $\|x_{0}\|\geq\gamma_{M}$, it holds that $\|x_{\varepsilon}\|\geq M\|x_{0}\|$,
for any $M<\infty$.
\end{lemma}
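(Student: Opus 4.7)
The approach is to expand the single leapfrog step, show that the intermediate momentum $q := p_0 - \tfrac{\e}{2}\nabla U(x_0)$ has a large magnitude comparable to $\tfrac{\e}{4}\nabla U(x_0)$, transfer this largeness through $\nabla K$ using the monotonicity conditions \eqref{eq:cond5a}/\eqref{eq:cond5b}, and then invoke Lemma B\ref{lem:kgrowth} to deduce that $\|\nabla K(q)\|/\|x_0\|\to\infty$. The conclusion then follows from the reverse triangle inequality on $x_\e = x_0 + \e\nabla K(q)$.

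More concretely, first I would observe that by the momentum bound and the ordinary triangle inequality,
\[
\|q\| \;\geq\; \tfrac{\e}{2}\|\nabla U(x_0)\| - \|p_0\| \;\geq\; \tfrac{\e}{4}\|\nabla U(x_0)\|,
\]
since $\|\nabla U(x_0)\|_\infty \leq \|\nabla U(x_0)\|$ makes the minimum in the hypothesis equal to $\tfrac{\e}{4}\|\nabla U(x_0)\|_\infty \leq \tfrac{\e}{4}\|\nabla U(x_0)\|$. Under \eqref{eq:cond5a} this immediately gives $\|\nabla K(q)\|\geq \|\nabla K\{\tfrac{\e}{4}\nabla U(x_0)\}\|$, since the right-hand side corresponds to a vector of strictly smaller norm.

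For the separable case \eqref{eq:cond5b}, I would work componentwise. Let $i^\star$ be an index where $|\partial_{i^\star}U(x_0)| = \|\nabla U(x_0)\|_\infty$. Then
\[
|q_{i^\star}| \;\geq\; \tfrac{\e}{2}|\partial_{i^\star}U(x_0)| - |p_{0,i^\star}| \;\geq\; \tfrac{\e}{4}\|\nabla U(x_0)\|_\infty \;=\; \tfrac{\e}{4}|\partial_{i^\star}U(x_0)|,
\]
using $|p_{0,i^\star}|\leq\|p_0\|\leq \tfrac{\e}{4}\|\nabla U(x_0)\|_\infty$. By \eqref{eq:cond5b} this yields $|k'(q_{i^\star})|\geq |k'(\tfrac{\e}{4}\partial_{i^\star}U(x_0))|$, and the same monotonicity identifies the right-hand side with $\|\nabla K\{\tfrac{\e}{4}\nabla U(x_0)\}\|_\infty$. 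Equivalence of norms in finite dimensions then gives $\|\nabla K(q)\| \geq d^{-1/2}\|\nabla K\{\tfrac{\e}{4}\nabla U(x_0)\}\|$. In either case we obtain a lower bound of the form $\|\nabla K(q)\|\geq c_d \|\nabla K\{\tfrac{\e}{4}\nabla U(x_0)\}\|$ for a dimension-only constant $c_d>0$.

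Finally, Lemma B\ref{lem:kgrowth} (which packages \eqref{eq:cond1} with \eqref{eq:cond4}, though here we only need the weaker statement provided by \eqref{eq:cond1} combined with the lower bound we just derived) implies $\|\nabla K\{\tfrac{\e}{4}\nabla U(x_0)\}\|/\|x_0\|\to\infty$ as $\|x_0\|\to\infty$. Given any target $M<\infty$, choose $\gamma_M$ so large that $\|x_0\|\geq \gamma_M$ forces $\e c_d \|\nabla K\{\tfrac{\e}{4}\nabla U(x_0)\}\| \geq (M+1)\|x_0\|$. Then
\[
\|x_\e\| \;\geq\; \|x_\e - x_0\| - \|x_0\| \;=\; \e\|\nabla K(q)\| - \|x_0\| \;\geq\; (M+1)\|x_0\| - \|x_0\| \;=\; M\|x_0\|,
\]
as required. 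The only mildly delicate point is the componentwise argument under \eqref{eq:cond5b}: one has to pick the maximizing coordinate $i^\star$ and pay the $d^{-1/2}$ factor through norm equivalence, since an $L^2$ cancellation of the $p_0$-perturbation cannot be obtained coordinate by coordinate from the given hypothesis.
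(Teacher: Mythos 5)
Your proof is correct and follows essentially the same route as the paper's: reduce via the reverse triangle inequality to a lower bound on $\|\nabla K\{p_0-\tfrac{\e}{2}\nabla U(x_0)\}\|$, handle \eqref{eq:cond5a} by bounding $\|q\|$ directly and \eqref{eq:cond5b} by a componentwise argument at the maximizing coordinate with a norm-equivalence constant, then conclude via Lemma~B\ref{lem:kgrowth}. The only quibble is your parenthetical suggesting \eqref{eq:cond4} is dispensable: the passage from \eqref{eq:cond1} (growth of $\|\nabla K\circ\nabla U(x)\|/\|x\|$) to growth of $\|\nabla K\{\tfrac{\e}{4}\nabla U(x_0)\}\|/\|x_0\|$ genuinely needs \eqref{eq:cond4} with $\delta=\e/4$, but since you invoke Lemma~B\ref{lem:kgrowth} anyway (exactly as the paper does) this does not affect the argument.
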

\begin{proof} Note
\begin{align*}
\|x_\e\| = \|x_0 + \e\nabla K \left\{ p_0 - \frac{\e}{2}\nabla U(x_0) \right\} \|
\geq \e\|\nabla K \left\{ p_0 - \frac{\e}{2}\nabla U(x_0) \right\} \| - \|x_0\|.
\end{align*}
It is therefore sufficient to show that for any $M<\infty$ we can choose an $\|x_0\|$ large enough that
\[
\|\nabla K \left\{ p_0 - \frac{\e}{2}\nabla U(x_0) \right\} \| \geq \frac{(M+1)}{\e}\|x_0\|.
\]
Under \eqref{eq:cond5a}, note that 
\begin{align*}
\|p_0 - \frac{\e}{2}\nabla U(x_0)\| \geq \frac{\e}{2}\|\nabla U(x_0)\| - \|p_0\| \geq \frac{\e}{4}\|\nabla U(x_0)\|,
\end{align*}
which implies
\[
\|\nabla K \left\{ p_0 - \frac{\e}{2}\nabla U(x_0) \right\} \| \geq \|\nabla K\left\{ \frac{\e}{4}\nabla U(x_0) \right\} \|.
\]
By \eqref{eq:explode}, therefore, if $\|x_0\|$ is chosen to be large enough then this can be made $\geq (M+1)\|x_0\|/\e$, for any finite $M$, proving the result.

Under \eqref{eq:cond5b}, recall that there exists global constants $C,c>0$ such that $C\|\nabla U(x)\| \geq \|\nabla U(x)\|_\infty \geq c\|\nabla U(x)\|$ for all $x \in \mathbb{R}^d$. It suffices in this setting therefore to show that we can choose an $\|x_0\|$ large enough that
\[
\|\nabla K \left\{ p_0 - \frac{\e}{2}\nabla U(x_0) \right\} \|_\infty \geq \frac{C(M+1)}{\e}\|x_0\|.
\]
We have
\[
\|\nabla K \left\{ p_0 - \frac{\e}{2}\nabla U(x_0) \right\} \|_\infty= \max_j |k' \{p_0(j) - \partial_j U(x_0)\}|.
\]
Write $i^*$ and $j^*$ to denote the indices for which $\|p_0-\nabla U(x_0)\|_\infty = |p_0(i^*) - \partial_{i^*}U(x_0)|$ and $\|\nabla U(x_0)\|_\infty = |\partial_{j^*} U(x_0)|$. We have:
\begin{align*}
\| p_0 - \frac{\e}{2}\nabla U(x_0)  \|_\infty &= |p_0(i^*) - \partial_{i^*}U(x_0)| \\
&\geq |p_0(j^*) - \partial_{j^*} U(x_0)| \\
&\geq \frac{\e}{2}|\partial_{j^*} U(x_0)| - |p_0(j^*)| \\
&\geq \frac{\e}{4}|\partial_{j^*} U(x_0)|.
\end{align*}
Now, $|p_0(i^*) - (\e/2)\partial_{i^*}U(x_0)| \geq (\e/4)|\partial_{j^*}U(x_0)|$ implies that $|k'\{p_0(i^*) - (\e/2)\partial_{i^*}U(x_0) \}| \geq |k'\{(\e/4)\partial_{j^*}U(x_0)\}| = \|\nabla K\{(\e/4)\nabla U(x_0)\}\|_\infty$. Using the global bounds then we see that for any $M<\infty$ we can choose an $\|x_0\|$ large enough that
\[
\frac{\|\nabla K\{(\e/4)\nabla U(x_0)\}\|_\infty}{\|x_0\|} \geq \frac{C(M+1)}{\e},
\]
establishing the result.
\end{proof}

\begin{lemma}\label{lem:Lstepbound} If $\pi(\cdot)$ is light-tailed and \eqref{eq:cond1}-\eqref{eq:cond4} and one of \eqref{eq:cond5a} and \eqref{eq:cond5b} hold, and provided that for any fixed $i\geq 0$\\
 (i)  $\|x_0\| \geq \gamma_M$ for some $\gamma_M <\infty$, \\
 (ii)  $\|p_0\| \leq (\e/4)\min\{\|\nabla U(x_0)\|, \|\nabla U(x_0)\|_\infty \}$, \\
 (iii)  $M$ is large enough that $\phi(M)\geq 7/3$ with $\phi$ as in \eqref{eq:cond2}, \\
 (iv)  $\|x_{j\e}\|\geq M\|x_{(j-1)\e}\|$ for all $j \leq i$, \\
it holds for any finite $M<\infty$ that $\|x_{(i+1)\e}\| \geq M\|x_{i\e}\|$.
\end{lemma}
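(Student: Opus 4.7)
The plan is to re-use the single-step reasoning from Lemma B\ref{lem:1stepbound} applied to the leapfrog move $x_{i\e} \mapsto x_{(i+1)\e}$, treating $x_{i\e}$ as the new initial position. Since $x_{(i+1)\e} = x_{i\e} + \e\nabla K(p_{(i+1/2)\e})$, by the reverse triangle inequality it suffices to show $\|\nabla K(p_{(i+1/2)\e})\| \geq (M+1)\|x_{i\e}\|/\e$, with the analogous $\ell_\infty$ inequality in the \eqref{eq:cond5b} case. The only delicate point is that condition (ii) controls $\|p_0\|$, not $\|p_{i\e}\|$; by the $(i+1)$-st leapfrog step the momentum has absorbed contributions from all previous gradient evaluations. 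First, therefore, I would show that $\|p_{(i+1/2)\e}\| \geq (\e/4)\|\nabla U(x_{i\e})\|$, after which the argument of Lemma B\ref{lem:1stepbound} closes the proof using Lemma B\ref{lem:kgrowth} applied at the point $x_{i\e}$, which is valid because hypothesis (iv) together with $M\geq 1$ (forced by \eqref{eq:cond2}) yield $\|x_{i\e}\| \geq \|x_0\| \geq \gamma_M$.

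The heart of the argument is the telescoping identity
\[
p_{(i+1/2)\e} = p_0 - \tfrac{\e}{2}\nabla U(x_0) - \e\sum_{j=1}^{i}\nabla U(x_{j\e}),
\]
obtained by unrolling the leapfrog recursion. Isolating the term $-\e\nabla U(x_{i\e})$ and applying the reverse triangle inequality gives
\[
\|p_{(i+1/2)\e}\| \geq \e\|\nabla U(x_{i\e})\| - \|p_0\| - \tfrac{\e}{2}\|\nabla U(x_0)\| - \e\sum_{j=1}^{i-1}\|\nabla U(x_{j\e})\|.
\]
Iterating \eqref{eq:cond2} down the growth chain in hypothesis (iv) yields $\|\nabla U(x_{j\e})\| \leq \phi(M)^{-(i-j)}\|\nabla U(x_{i\e})\|$ for all $0 \leq j \leq i$, and (ii) then gives $\|p_0\| \leq (\e/4)\phi(M)^{-i}\|\nabla U(x_{i\e})\|$. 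The subtracted mass is thus at most $\e\|\nabla U(x_{i\e})\|\bigl[(3/4)\phi(M)^{-i} + \sum_{k=1}^{i-1}\phi(M)^{-k}\bigr]$. A direct geometric-series computation shows this bracket is at most $3/4$ precisely when $1/(\phi(M)-1)\leq 3/4$, i.e.\ when $\phi(M) \geq 7/3$, giving $\|p_{(i+1/2)\e}\| \geq (\e/4)\|\nabla U(x_{i\e})\|$. This summation bookkeeping is the main technical obstacle, and is the whole reason for the somewhat mysterious threshold in hypothesis (iii).

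Once the momentum lower bound is in place, the remainder is routine. Under \eqref{eq:cond5a}, monotonicity and the fact that $\|(\e/4)\nabla U(x_{i\e})\| \leq \|p_{(i+1/2)\e}\|$ give $\|\nabla K(p_{(i+1/2)\e})\| \geq \|\nabla K\{(\e/4)\nabla U(x_{i\e})\}\|$, and Lemma B\ref{lem:kgrowth} at the point $x_{i\e}$ shows the right-hand side exceeds $(M+1)\|x_{i\e}\|/\e$ once $\gamma_M$ is chosen large enough; the reverse triangle inequality then yields $\|x_{(i+1)\e}\| \geq M\|x_{i\e}\|$. Under \eqref{eq:cond5b} I would repeat the same scheme componentwise, mirroring the $\ell_\infty$ portion of the proof of Lemma B\ref{lem:1stepbound}: fix an index $j^*$ realising $\|\nabla U(x_{i\e})\|_\infty$, run the telescoping lower bound on the single scalar $p_{(i+1/2)\e,j^*}$, apply componentwise monotonicity of $k'$, and absorb the harmless norm-equivalence constants $C,c$ when translating the $\ell_2$ decay estimates on $\|\nabla U(x_{j\e})\|$ into per-coordinate bounds, with Lemma B\ref{lem:kgrowth} again closing the estimate.
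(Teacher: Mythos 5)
Your proof is correct and follows essentially the same route as the paper's: the same telescoping identity for $p_{(i+\frac{1}{2})\e}$, the same geometric-series bookkeeping yielding the $\phi(M)\geq 7/3$ threshold for the bound $\|p_{(i+\frac{1}{2})\e}\|\geq(\e/4)\|\nabla U(x_{i\e})\|$, and the same appeal to the monotonicity conditions and Lemma B\ref{lem:kgrowth} at the point $x_{i\e}$ before closing with the reverse triangle inequality. The only cosmetic difference is in the \eqref{eq:cond5b} branch, where the paper reuses the Euclidean momentum bound rather than re-running a per-coordinate telescoping, which keeps the $7/3$ threshold free of the norm-equivalence constants you mention absorbing.
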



\begin{proof} We first show
\begin{equation} \label{eq:b5pbound}
\|p_{(i+\frac{1}{2})\e}\| \geq \frac{\e}{4}\|\nabla U(x_{i\e})\|.
\end{equation}
To show \eqref{eq:b5pbound}, first note by iterating \eqref{eqn:momentum_update} and noting $p_{(i+1/2)\e} = p_{i\e} - \e\nabla U(x_{i\e})/2$ that
\begin{align*}
\|p_{(i+\frac{1}{2})\e}\| &= \|p_0 - \frac{\e}{2}\nabla U(x_0) - \e \sum_{j=1}^i\nabla U(x_{j\e}) \| \\
 &\geq \e\|\nabla U(x_{i\e})\| - \frac{\e}{2}\|\nabla U(x_0)\| - \e \sum_{j=1}^{i-1} \|\nabla U(x_{j\e})\| - \|p_0\|.
\end{align*}
Using the stated assumption that $\|p_0\| \leq (\e/4)\|\nabla U(x_0)\|$ then gives
\[
\|p_{(i+\frac{1}{2})\e}\| \geq \e \left\{ \|\nabla U(x_{i\e})\| - \sum_{j=0}^{i-1} \|\nabla U(x_{j\e})\| \right\}. 
\]
Now, \eqref{eq:cond2} implies that if $\|x_{(j-1)\e}\| \leq M^{-1} \|x_{j\e}\|$, $\|\nabla U(x_{(j-1)\e})\| \leq \phi(M)^{-1}\|\nabla U(x_{j\e})\|$. Substituting into the above expression gives
\[
\|p_{(i+\frac{1}{2})\e}\| \geq \e \left\{ 1 - \sum_{j=0}^{i-1} \phi(M)^{j-i} \right\}\|\nabla U(x_{i\e})\|.
\]
To finish the argument we need to therefore show that $\{ 1 - \sum_{j=0}^{i-1} \phi(M)^{j-i} \} \geq 1/4$. First note that
\[
\sum_{j=0}^{i-1} \phi(M)^{j-i} = \phi(M)^{-i} + \phi(M)^{-(i-1)} + ... + \phi(M)^{-1} = \sum_{j=1}^i \phi(M)^{-j}.
\]
Using simple geometric series identities gives 
\begin{align*}
\sum_{j=1}^i \phi(M)^{-j} &= \frac{1-\phi(M)^{-(i+1)}}{1-\phi(M)^{-1}} - 1 \\
&= \frac{\phi(M) - \phi(M)^{-i}}{\phi(M) - 1} - \frac{\phi(M) - 1}{\phi(M) - 1} \\
&= \frac{1-\phi(M)^{-i}}{\phi(M) - 1}.
\end{align*}
Hence $\{ 1 - \sum_{j=0}^{i-1} \phi(M)^{j-i} \} \geq 1/4$ if 
\[
1 - \frac{1-\phi(M)^{-i}}{\phi(M) - 1} \geq 1/4,
\]
which will hold if $\phi(M) \geq 7/3$ for any $i\geq 1$. The stated assumption that $\phi(M)\geq 7/3$ therefore establishes \eqref{eq:b5pbound}.

If \eqref{eq:cond5a} holds, then \eqref{eq:b5pbound} can be combined with \eqref{eq:explode} directly to show that for any $M<\infty$ and $\e \in (0,\infty)$, if $\|x_{i\e}\|\geq \gamma_M$ for some suitably chosen $\gamma_M <\infty$ it will hold that
\[
\|\nabla K\{p_{(i+\frac{1}{2})\e}\}\| \geq \|\nabla K\left\{ \frac{\e}{4}\nabla U(x_{i\e})\right\}\| \geq \frac{(M+1)}{\e}\|x_{i\e}\|.
\]
If \eqref{eq:cond5b} holds instead of \eqref{eq:cond5a}, then the same result can be established using a similar argument to that given in the last paragraph of the proof of Lemma B\ref{lem:1stepbound}. Hence, provided that $\|x_0\| \geq \gamma_M$, for all $i\geq 0$ it holds that 
\begin{align*} 
\|x_{(i+1)\e}\| &= \|x_{i\e} + \e\nabla K\{p_{(i+\frac{1}{2})\e}\}\| \\
 &\geq \e\|\nabla K\{p_{(i+\frac{1}{2})\e}\}\|- \|x_{i\e}\| \\
 &\geq M\|x_{i\e}\|,
\end{align*}
which proves the result.

\end{proof}

\begin{lemma}\label{lem:pbound} Under the same conditions as Lemma B\ref{lem:Lstepbound} and provided $M$ is such that $\phi(M)\geq 5$, then $\|p_{L\varepsilon}\| \geq \|p_{0}\|$
for any $L \geq 1$.
\end{lemma}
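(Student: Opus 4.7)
The plan is to express $p_{L\varepsilon}$ explicitly in terms of $p_0$ and the sequence of potential gradients evaluated along the leapfrog trajectory, and then to exploit the fact that by Lemma B\ref{lem:Lstepbound} the norms $\|x_{j\varepsilon}\|$ grow geometrically, so that by \eqref{eq:cond2} the norms $\|\nabla U(x_{j\varepsilon})\|$ also grow geometrically at rate $\phi(M)$. The terminal gradient $\nabla U(x_{L\varepsilon})$ will then dominate all the earlier ones, giving the desired lower bound on $\|p_{L\varepsilon}\|$.

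Concretely, I would first iterate the half-step momentum recursion to write
\[
p_{L\varepsilon} = p_0 - \frac{\varepsilon}{2}\nabla U(x_0) - \varepsilon\sum_{j=1}^{L-1}\nabla U(x_{j\varepsilon}) - \frac{\varepsilon}{2}\nabla U(x_{L\varepsilon}),
\]
and apply the reverse triangle inequality to obtain
\[
\|p_{L\varepsilon}\| \geq \frac{\varepsilon}{2}\|\nabla U(x_{L\varepsilon})\| - \|p_0\| - \frac{\varepsilon}{2}\|\nabla U(x_0)\| - \varepsilon\sum_{j=1}^{L-1}\|\nabla U(x_{j\varepsilon})\|.
\]
To conclude $\|p_{L\varepsilon}\|\geq \|p_0\|$ it then suffices to show
\[
\frac{\varepsilon}{2}\|\nabla U(x_{L\varepsilon})\| \geq 2\|p_0\| + \frac{\varepsilon}{2}\|\nabla U(x_0)\| + \varepsilon\sum_{j=1}^{L-1}\|\nabla U(x_{j\varepsilon})\|.
\]
Using the hypothesis $\|p_0\|\leq (\varepsilon/4)\|\nabla U(x_0)\|$ absorbs $2\|p_0\|$ into $(\varepsilon/2)\|\nabla U(x_0)\|$, leaving the requirement
\[
\|\nabla U(x_{L\varepsilon})\| \geq 2\sum_{j=0}^{L-1}\|\nabla U(x_{j\varepsilon})\|\cdot \mathbf{1}_{\{j\geq 1\}} + 2\|\nabla U(x_0)\|,
\]
which after re-indexing is just $\|\nabla U(x_{L\varepsilon})\|\geq 2\sum_{j=0}^{L-1}\|\nabla U(x_{j\varepsilon})\|$.

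The geometric-decay step is where I would apply Lemma B\ref{lem:Lstepbound}: under its hypotheses, $\|x_{j\varepsilon}\|\geq M\|x_{(j-1)\varepsilon}\|$ for every $j\leq L$, so by \eqref{eq:cond2} we have $\|\nabla U(x_{(j-1)\varepsilon})\|\leq \phi(M)^{-1}\|\nabla U(x_{j\varepsilon})\|$, and iterating gives $\|\nabla U(x_{j\varepsilon})\|\leq \phi(M)^{j-L}\|\nabla U(x_{L\varepsilon})\|$. Summing the geometric series yields
\[
\sum_{j=0}^{L-1}\|\nabla U(x_{j\varepsilon})\| \leq \|\nabla U(x_{L\varepsilon})\|\sum_{k=1}^{L}\phi(M)^{-k} \leq \frac{\|\nabla U(x_{L\varepsilon})\|}{\phi(M)-1}.
\]
Thus the desired inequality reduces to $2/(\phi(M)-1)\leq 1$, i.e.\ $\phi(M)\geq 3$, and the stated hypothesis $\phi(M)\geq 5$ is more than enough.

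The only subtlety I anticipate is bookkeeping of the half-step endpoint terms in the momentum sum and making sure that \eqref{eq:cond2} is applicable at every index $j$, which requires $\|x_{j\varepsilon}\|\geq m_2$ and $M\geq m_1$; both follow from Lemma B\ref{lem:Lstepbound} by taking $\|x_0\|$ and $M$ sufficiently large, exactly as in the previous lemma. With these in hand the argument is essentially a direct geometric-series estimate, so I do not expect any genuine obstacle.
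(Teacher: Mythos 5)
Your proposal is correct and follows essentially the same route as the paper: expand the momentum recursion, apply the reverse triangle inequality, absorb $\|p_0\|$ using $\|p_0\|\leq(\varepsilon/4)\|\nabla U(x_0)\|$, and control the earlier gradients by the geometric decay $\|\nabla U(x_{j\varepsilon})\|\leq\phi(M)^{j-L}\|\nabla U(x_{L\varepsilon})\|$ coming from Lemma B\ref{lem:Lstepbound} and \eqref{eq:cond2}. The only difference is cosmetic bookkeeping at the end: you target $\|p_{L\varepsilon}\|\geq\|p_0\|$ directly and land on the condition $\phi(M)\geq 3$, whereas the paper converts $\|\nabla U(x_{L\varepsilon})\|$ back into $\|p_0\|$ via $\|\nabla U(x_{L\varepsilon})\|\geq\phi(M)^L(4/\varepsilon)\|p_0\|$ and verifies the resulting condition under $\phi(M)\geq 5$; both are valid under the stated hypothesis.
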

\begin{proof} We have
\begin{align*}
\|p_{L\e}\| &= \|p_0 - \frac{\e}{2}\left\{ \nabla U(x_0) + \nabla U(x_{L\e})\right\} - \e\sum_{i=1}^{L-1}\nabla U(x_{i\e}) \| \\
&\geq \frac{\e}{2} \left\{ \|\nabla U(x_{L\e})\|-2\sum_{i=0}^{L-1}\|\nabla U(x_{i\e})\| \right\},
\end{align*}
by recalling that as in Lemma B\ref{lem:Lstepbound} $\|p_0\| \leq (\e/4)\|\nabla U(x_0)\|$. Using \eqref{eq:cond2} and the stated assumptions we have for any $i \leq L$  that
\[
\|\nabla U(x_{i\e})\| \leq \phi(M)^{i-L}\|\nabla U(x_{L\e})\|,
\]
which implies
\[
\|p_{L\e}\| \geq \frac{\e}{2}\left\{ 1-2\sum_{i=0}^{L-1}\phi(M)^{i-L} \right\}\|\nabla U(x_{L\e})\|
\]
The stated assumptions $\|p_0\| \leq (\e/4)\|\nabla U(x_0)\|$ and $\phi(M)^L\|\nabla U(x_0)\| \leq \|\nabla U(x_{L\e})\|$ lead to the bound
\[
\|\nabla U(x_{L\e})\| \geq \phi(M)^L \frac{4}{\e} \|p_0\|,
\]
which, when combined with the above inequality, give
\[
\|p_{L\e}\| \geq 2\left\{ 1-2\sum_{i=0}^{L-1}\phi(M)^{i-L} \right\}\phi(M)^L \|p_0\|.
\]
As shown in the proof of Lemma B\ref{lem:Lstepbound} $\sum_{i=0}^{L-1}\phi(M)^{i-L} = \{1-\phi(M)^{-L}\}/\{\phi(M) - 1\}$. Hence the result is proven if
\[
2\left\{ 1-2\frac{1-\phi(M)^{-L}}{\phi(M)-1} \right\}\phi(M)^L \geq 1,
\]
which will indeed be true under the stated assumption that $\phi(M) >5$.
\end{proof}

\begin{lemma}
\label{lem:compact} If \eqref{eqn:negmove} holds for a Hamiltonian Monte Carlo method then any small set must be bounded.
\end{lemma}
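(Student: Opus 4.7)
My approach is by contradiction: assume $C$ is a small set that is unbounded, and derive a contradiction with \eqref{eqn:negmove}. Recall that by smallness there exist an integer $n\geq 1$, a constant $\delta_C>0$ and a nontrivial measure $\nu$ on $(\R^d,\B)$ with $P^n(x,A)\geq \delta_C\,\nu(A)$ for all $x\in C$ and measurable $A$.

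The first step is to lift the one-step negligible moves property to an $n$-step version. Concretely, I would show by induction on $n$ that for every $\delta'>0$ and $\eta'>0$ there is an $M<\infty$ such that $P^n(x,B_{\delta'}(x))>1-\eta'$ whenever $\|x\|>M$. The base case $n=1$ is \eqref{eqn:negmove} itself. For the inductive step, set $\delta_0=\delta'/n$ and $\eta_0=\eta'/n$, use \eqref{eqn:negmove} to pick $M_0$ with $P(y,B_{\delta_0}(y))>1-\eta_0$ for $\|y\|>M_0$, and then take $M>M_0+\delta'$; the triangle inequality ensures that a trajectory that stays inside the growing balls around $x$ after $j$ steps is at a point of norm at least $\|x\|-j\delta_0>M_0$, so the one-step bound can be reapplied. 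A union bound over the $n$ steps yields the desired probability at least $1-n\eta_0=1-\eta'$ of ending up in $B_{\delta'}(x)$.

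The second step is to exploit the minorization. Since $\nu$ is a nontrivial measure on $\R^d$, there exists $R<\infty$ for which $A:=B_R(0)$ satisfies $a:=\nu(A)>0$, and hence $P^n(x,A)\geq \delta_C a$ for every $x\in C$. Now take any sequence $(x_k)_{k\geq 1}\subset C$ with $\|x_k\|\to\infty$, which exists by assumption. Applying the $n$-step bound from the first step with $\delta'=1$ and $\eta'=\delta_C a/2$, we may take $k$ large enough that both $P^n(x_k,B_1(x_k))>1-\delta_C a/2$ and $\|x_k\|>R+1$. The latter forces $B_1(x_k)\cap A=\emptyset$, so
\[
P^n(x_k,A)\leq 1-P^n(x_k,B_1(x_k))<\tfrac{1}{2}\delta_C a,
\]
contradicting the minorization bound $P^n(x_k,A)\geq \delta_C a$.

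The main obstacle is the first step: the hypothesis \eqref{eqn:negmove} is only an asymptotic statement in $\|x\|$, so iterating it naively is not immediate because after a transition the chain could in principle drift into a region where the one-step bound has a worse constant. This is handled by choosing $M$ sufficiently large that the accumulated displacement over $n$ steps (bounded by $n\delta_0=\delta'$) still leaves the intermediate positions in the regime $\|\cdot\|>M_0$ where the one-step bound applies with uniform $\eta_0$; everything else is routine.
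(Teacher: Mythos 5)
Your proof is correct, but it takes a genuinely different route from the paper. The paper does not argue from \eqref{eqn:negmove} directly: it observes that for the Hamiltonian Monte Carlo kernel in question the composite gradient $\nabla K\circ\nabla U(x)$ is continuous and vanishing at infinity, hence bounded, so the family of increment distributions $\{P(x,\cdot)-x\}$ is uniformly tight, and then invokes Lemma 2.2 of Jarner and Hansen (2000), which states that tight increments force small sets to be bounded. Your argument instead works purely from the negligible moves property itself: you lift the one-step statement \eqref{eqn:negmove} to an $n$-step version (choosing $M>M_0+\delta'$ so that all intermediate positions remain in the regime where the uniform one-step bound applies, then applying a union bound), and contradict the minorization $P^n(x,\cdot)\geq\delta_C\nu(\cdot)$ by localizing the mass of $\nu$ on a bounded set and sending $\|x_k\|\to\infty$ along the putatively unbounded small set. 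What your approach buys is generality and self-containment: it applies to \emph{any} Markov chain with the negligible moves property, uses no structure of the HMC kernel, and is arguably more faithful to the hypothesis as literally stated in the lemma (the paper's proof quietly uses boundedness of the composite gradient, a property of the specific kernel rather than a consequence of \eqref{eqn:negmove} alone). What the paper's approach buys is brevity, by outsourcing the combinatorial iteration to a known tightness result. Both are valid; your $n$-step lifting argument is the only delicate point and you have handled the asymptotic-in-$\|x\|$ nature of the hypothesis correctly.
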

\begin{proof} Since $\nabla K(p)$ and $\nabla U(x)$ are continuous and $\nabla K \circ \nabla U(x)$ is vanishing at infinity, then $\nabla K \circ \nabla U(x)$ is also bounded, implying that the collection of Hamiltonian Monte Carlo increments $\{ P(x,\cdot) - x \}$ is uniformly tight, using a similar argument to that employed in the proof of Lemma B\ref{lem:ht2}.  Hence, any small set must be bounded, following Lemma 2.2 of \cite{jarner2000geometric}.
\end{proof}

\section{Numerical example of the efficiency-robustness trade-off}

We consider the distribution $\pi(x) \propto \exp\left( - \beta^{-1}|x|^\beta \right)$ with $\beta = 1.5$, and compare the quadratic and relativistic kinetic energy choices.  We test 80 evenly spaced choices of step-size $\e$ spanning from 0.1 to 5, and in each case begin the sampler at equilibrium and compute the empirical expected squared jump distance from a chain of length 200,000, with the number of leapfrog steps randomly selected uniformly between 1 and 5 at each iteration.  The results are shown in Figure \ref{figA}.
\begin{figure}
\centering
\includegraphics[height=15pc]{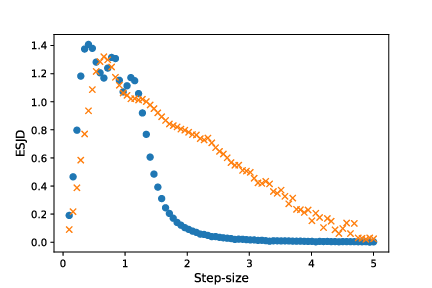}
\caption{Empirical expected squared jump distance versus step-size for the relativistic kinetic energy, denoted by orange crosses, and quadratic kinetic energy, denoted by blue dots.}
\label{figA}
\end{figure}
As can be seen, the quadratic choice leads to a higher optimal value, but when step-sizes are chosen to be too large the jump distance drops quickly. The relativistic choice, by comparison, exhibits a larger degree of robustness to bigger step-sizes.

\end{document}